\documentclass[12pt]{article}
\title{Suslin-Rao Codes : the bridge between pure  and applied}
\author{
\text{Nitin Kenjale} \href{https://orcid.org}{\small (ORCID: 0009-0004-2083-910X)} \\ 
\small 
K. J. Somaiya Institute of Technology, Sion,\\
\small
Mumbai-400022, India \\ \small \&\\
\text{Anuradha S. Garge} \href{https://orcid.org}{\small (ORCID: 0000-0001-9276-3985)} \\[1ex]
\small
Department of Mathematics, \\
\small 
University of Mumbai, Mumbai- 400098, India.} 
\date{{\small \today}} 
\usepackage{amsmath,amssymb,amsfonts}
\usepackage{amsthm}
\usepackage{graphicx,latexsym}
\usepackage{tikz} 
\usepackage{amscd, amsmath,amssymb,amsfonts}
\usepackage{amsthm} 
\usepackage{ulem}
\usepackage{graphicx}
\usepackage{amscd, amsmath, amssymb, amsfonts}
\usepackage{amsthm}
\usepackage{ulem}
\usepackage{tabularx}

\newtheorem{thm}{Theorem}[section]
\theoremstyle{definition}

\newtheorem{dfn}{Definition}[section]
\theoremstyle{remark}

\theoremstyle{plain}

\newtheorem{prop}[thm]{Proposition}
\usepackage[pdftex]{hyperref}
\newtheorem{rem}[thm]{Remark}

\theoremstyle{plain} 
\newtheorem{lemma}{Lemma}[section]

\def\B{{\operatorname{B}}}            
\def\T{{\operatorname{T}}}            
 
\begin{document}
\maketitle{}
\begin{abstract}
In the year $1977,$ while developing the theory of modules arising in the context of Serre's problem on projective modules in the paper ``On Stably Free Modules" (see  \cite{AS}), A. A. Suslin constructed matrices inductively using unimodular rows. Our aim is to connect these matrices  to the current and applicable field of coding theory. To this end, we construct Suslin-Rao codes, by using 
all one vectors over the finite field $\mathbb{F}_2$. 
The generator matrix of the proposed codes takes the form $G_r = [I_{2^{r-1}} \mid \mathcal{S}_{r-1}]$, where $\mathcal{S}_{r-1}$ is a recursively defined Suslin matrix and $I_{2^{r-1}}$ is the identity matrix. For each positive integer $r$, the constructed code has parameters $[n = 2^r,\, k = 2^{r-1},\, d]$, where $d = r$ or $r+1$ depending on whether $r$ is even or odd respectively. 
Comparative analysis with standard families such as Hamming, Simplex and BCH codes shows that the Suslin-Rao codes achieve better performance than Hamming and Simplex codes in terms of code rate and error correction capability, though BCH codes may outperform them for large code lengths. Bit error rate (BER) versus signal-to-noise ratio (SNR) simulations further support the practical applicability of these codes in noisy channels. The recursive nature of the construction allows efficient encoder design, making the proposed codes attractive for real-time communication systems and hardware  implementation.
\end{abstract}

\noindent {\bf Keywords:} 
Linear Codes, Suslin Matrices, Recursive Construction,\\ Generator Matrix, Minimum Distance, Finite Fields.
\vspace{2 mm}

\noindent {\bf{MSC 2020 Classification}}:
11T71, 15B10, 94B05, 94B25, 94B65. 

\section{Introduction}
A. A. Suslin (see \cite{AS}) introduced matrices, in his work on projective modules, particularly in connection with the Serre's conjecture, later known as the Quillen-Suslin theorem (see \cite{VA}). Today we know them as Suslin matrices. These matrices arise naturally in the study of unimodular rows and provide explicit algebraic constructions. Further Ravi Rao and Selby Jose studied Suslin matrices. They developed algebraic methods that significantly advanced the understanding of structures related to unimodular rows and their completion (see \cite {RAO}, \cite{RAOS}). In this paper, we use Suslin matrices to explore construction of new classes of linear codes.

The present work also makes use of the construction technique introduced by B. Hurley and T. Hurley in their paper (\cite{HurleyHurley2014}), ``Systems of MDS codes from units and idempotents." Specifically, we rely on the methodology discussed in the section on codes from units and adapt it to our setting. The codes thus obtained are however not MDS.

In this work, we also explore a different and relatively less-studied direction, the construction of linear block codes from recursive matrix families, particularly those arising from Suslin's construction. Suslin matrices are characterized by their recursive definitions and rich algebraic properties, offering a natural framework for generating structured codes with predictable parameters and potentially good error-correcting performance.

In particular, we investigate a family of codes $\mathcal{SRC}_r$ which we call as Suslin-Rao codes defined using Suslin matrices over $\mathbb{F}_2$. 
The construction presented in this paper leads to a class of linear block codes with explicitly computable parameters. In particular, the codes obtained have length $n=2^r$, dimension $k=2^{r-1},$ and minimum distance $d=r$ or $r+1$ depending on whether $r$ is even or odd positive integer.

Furthermore, we compare these codes with some standard constructions using the minimum distance and bit-error rate as performance metrics. 
The principal contributions of this paper are:
\begin{enumerate}
    \item We introduce a recursive framework for constructing generator matrices of linear block codes based on Suslin matrices.
    \item We explicitly determine the code parameters $(n, k, d)$ of the resulting family of codes based on the recursion index.
    \item We investigate structural and algebraic properties of the constructed codes, including self-duality conditions and their behavior with respect to classical coding-theoretic bounds.
    \item We provide a comparative performance analysis between the proposed codes and selected classical linear codes.
\end{enumerate}

The remainder of the paper is organized as follows: 
Section~2 presents the preliminaries and notation used throughout the paper. 
Section~3 provides a brief overview of Suslin matrices and their fundamental properties. 
In Section~4, we introduce the proposed Suslin-Rao codes, including their construction, basic properties, decoding approach and weight distribution. 
Section~5 presents a comparative analysis of the proposed code with several classical and modern linear block codes in terms of parameters, code rate, relative distance and bit error rate (BER) performance. 

\section{Preliminaries and Notations} 
Let us denote by $\mathbb{F}_2$, the finite field over $2$ elements and by $\mathbb{F}_2^n$, the $n-$ dimensional vector space over $\mathbb{F}_2$. We recall some basic definitions from (\cite{RR}). The Hamming weight of $c\in\mathbb{F}_2^n$ denoted by $wt(c)$ is the number of non-zero coordinates of $c.$ For any vectors $x,y \in \mathbb{F}_2^n$, the Hamming distance between $x$ and $y$ denoted by $d(x,y)$ is the number of coordinates in which $x$ and $y$ differ. We call $C$ a linear code over $\mathbb{F}_2$ if it is a linear subspace of $\mathbb{F}_2^n$ and elements of $C$ are called as codewords of $C.$ We denote by $d,$ the minimum Hamming distance between any two distinct codewords of $C.$ The support of a codeword $c = (c_1, \dots, c_n) \in C$ is the set of all indices at which $c$ is nonzero i.e., 
$\operatorname{supp}(c) = \{i \in \{1, \dots, n\} : c_i \ne 0\}$.

A generator matrix of $C$ is a $k\times n$ matrix $G\in M_{k,n}(\mathbb{F}_2),$ whose rows form a basis of $C$ i.e., $C=\{xG~|~x\in \mathbb{F}_2^k\}$. A parity check matrix of $C$ is a $(n-k)\times n$ matrix $H\in M_{(n-k),n}(\mathbb{F}_2)$ such that $C=\{c\in \mathbb{F}_2^n~|~cH^T=0\}$. The generator matrix $G$ and the parity check matrix $H$ are connected through the orthogonality relation $GH^T=0$, which reflects the fact that the row space of $G$ is the orthogonal to the row space of $H.$

The dimension of a linear code $C$, denoted by $k$ is the dimension of $C$ as a vector space over $\mathbb{F}_2^n$ and is also coincides with the rank of any of its generator matrices. Throughout this paper, a linear code over $\mathbb{F}_2$ with length $n$, dimension $k$ and minimum Hamming distance $d$ is referred to as an $[n,k,d]_2$ code.

The dual of code $C \subseteq \mathbb{F}_2^n$ denoted by $C^\perp$ is defined as the set of vectors orthogonal to every codeword in $C$
i.e., $
C^\perp = \{ y \in \mathbb{F}_2^n \mid x\cdot y = 0 \text{ for all } x \in C \}.
$
If $C$ is linear code then its dual code $C^\perp$ is also a linear code and has dimension $n-k$. A linear code $C \subseteq \mathbb{F}_q^n$ is said to be self-dual if it is equal to its dual, i.e.,
$C = C^\perp.$
This implies that $C$ must have dimension $k = n/2$, so self-dual codes exist only when $n$ is even (see \cite{FN}, \cite{Huff}).

We also recall here various bounds on linear codes. A fundamental result in coding theory that, any $[n,k,d]_2$ linear code satisfies the Singleton bound: $d \leq n - k + 1.$ This bound provides an upper bound on the minimum distance of a linear code in terms of its length and dimension. Codes that meet this bound with equality, i.e., $d = n - k + 1$, are called Maximum Distance Separable (MDS) codes. Examples of MDS codes include Reed-Solomon codes and trivial repetition or parity-check codes (see \cite{FN}, \cite{Huff}). The Griesmer bound gives a lower bound on the length of a linear code as a function of its dimension and minimum distance, and is known to be tighter than the Singleton bound for many binary codes. In particular, for a  linear code with parameters $[n,k,d]_2,$ the length must satisfy $\displaystyle n\geq \sum_{i=0}^{k-1} \lceil\frac{d}{2^i}\rceil$ (see \cite{Griesmer}). The sphere packing bound is an upper bound on the number of codewords a code can have for a given length and minimum distance. For a binary $[n,k,d]$ linear code, the sphere packing bound states that $\displaystyle 2^k \sum_{i=0}^t {n \choose i}\leq 2^n,$ where $t=\lfloor \frac{d-1}{2}\rfloor$ is the number of correctable errors (see \cite{FN}). 

The rate of a code given by $R=\frac{k}{n}$, represents the proportion of information symbols relative to the total number of transmitted symbols. A higher code rate corresponds to greater transmission efficiency, whereas a lower rate generally allows for improved error-correction capability (see \cite{RR}).

This foundational theory forms the basis of many powerful code families including Hamming, Simplex, BCH, Hadamard and Reed-Muller codes. In this paper, we extend this framework by constructing generator matrices using structured algebraic matrices, such as Suslin matrices, to define new families of binary linear codes. Thus, the classical algebraic K-Theory developed in $1970'$s has application in coding theory.
\section{Suslin Matrices} 
A. A. Suslin (see \cite{AS}) constructed square matrices of order $2^r$ denoted by  $S_r(v,w)$ using a recursive process. These matrices were later referred to as Suslin matrices. For a positive integer $r$ and vectors $v=(a_0, a_1,\dots,a_{r}),$ $w=(b_0, b_1,\dots,b_{r}) \in M_{1,r+1}(R)$ over any commutative ring R with unity with $det(S_r(v,w))=\langle v,w\rangle^{2^{r-1}}$, Suslin defined  $$S_r(v,w) =
\begin{pmatrix}
   a_0I_{2^{r-1}} & S_{r-1}(v_1,w_1)\\
  -S_{r-1}(w_1,v_1)^T & b_0I_{2^{r-1}}
                                                                                                                                                                                                                                      \end{pmatrix}$$
where $v_1=(a_1,a_2,\dots,a_{r})$ and $w_1=(b_1,b_2,\dots,b_{r})$. We call these matrices as Suslin matrices with respect to $v,w$.\\
To begin with we set $S_0(a_0,b_0)=\begin{pmatrix} a_0 \end{pmatrix}$. Then, we have $$S_1((a_0,a_1),(b_0,b_1))=
\begin{pmatrix}
 a_0 & a_1 \\
 -b_1 & b_0 
 \end{pmatrix};$$
, $$S_2((a_0,a_1,a_2),(b_0,b_1,b_2))=
\begin{pmatrix}
 a_0 & 0 & a_1 & a_2 \\
 0 & a_0 & -b_2 & b_1\\
 -b_1 & a_2 & b_0 & 0\\
 -b_2 & -a_1 &0 & b_0
 \end{pmatrix};
$$
$$S_3((a_0,a_1,a_2,a_3),(b_0,b_1,b_2,b_3))=
\begin{pmatrix}
 a_0 & 0 & 0 & 0 & a_1 & 0 & a_2 & a_3\\
 0 & a_0 & 0 & 0 & 0 & a_1 & -b_3 & b_2\\
 0 & 0 & a_0 & 0 & -b_2 & a_3 & b_1 & 0\\
 0 & 0 & 0 & a_0 & -b_3 & -a_2 &0 & b_1\\
 -b_1 & 0 & a_2 & a_3 & b_0 & 0 & 0 & 0\\
 0 & -b_1 & -b_3 & b_2 & 0 & b_0 & 0 & 0\\
 -b_2 & a_3 & -a_1 & 0 & 0 & 0 & b_0 & 0\\
 -b_3 & -a_2 &0 & -a_1& 0 & 0 & 0 & b_0  
 \end{pmatrix}.$$
 \label{def:2.1}

Note that $S_r(v, w)$ has entries only $0,\pm a_i,$ and $\pm b_i$ $(0 \leq i \leq r)$ 
and each row and column has at most $r + 1$ non-zero elements.
We recall below, one of the important property of the Suslin matrix. For a detailed study on Suslin matrices, the interested reader should refer to the paper (see \cite{RAO}).

\begin{thm} (see \cite{AS})  \label{thm:1.1}
 For $r \geq 1,$ and $v = (a_0, a_1, . . . , a_r), w = (b_0, b_1, . . . , b_r) \in M_{1,r+1}(R)$, the Suslin matrix $S_r(v, w)$ has the following property: 
 $$S_r(v,w)S_r(w,v)^T=\langle v,w\rangle I_{2^r}=S_r(w,v)^T S_r(v,w).$$ 
\end{thm}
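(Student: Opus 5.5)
We prove both identities \emph{simultaneously} by induction on $r$, working directly with the recursive block description of $S_r(v,w)$. Here $\langle v,w\rangle=\sum_{i=0}^r a_ib_i$. Throughout write $v_1=(a_1,\dots,a_r)$ and $w_1=(b_1,\dots,b_r)$, and abbreviate $A=S_{r-1}(v_1,w_1)$ and $B=S_{r-1}(w_1,v_1)$. The induction hypothesis, for vectors of length $r$, is the pair of identities
\[
AB^T=\langle v_1,w_1\rangle I_{2^{r-1}}, \qquad B^TA=\langle v_1,w_1\rangle I_{2^{r-1}}.
\]
Both are needed because each of the two target identities uses both of them.

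\emph{Base case.} Start at $r=0$, where $S_0(a_0,b_0)=(a_0)$ and $S_0(b_0,a_0)=(b_0)$. Both products equal $(a_0b_0)$, which is $\langle v,w\rangle$. If one prefers to start at $r=1$, one checks directly from the displayed $S_1$ that
\[
\begin{pmatrix} a_0 & a_1\\ -b_1 & b_0\end{pmatrix}\begin{pmatrix} b_0 & -a_1\\ b_1 & a_0\end{pmatrix}=(a_0b_0+a_1b_1)I_2,
\]
and similarly with the factors in the other order. This also confirms that the recursion with $S_0$ reproduces the displayed $S_1$.

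\emph{Inductive step.} Swapping the roles of $v$ and $w$ in the definition gives
\[
S_r(w,v)=\begin{pmatrix} b_0I & B\\ -A^T & a_0I\end{pmatrix},
\qquad\text{so}\qquad
S_r(w,v)^T=\begin{pmatrix} b_0I & -A\\ B^T & a_0I\end{pmatrix}.
\]
Multiply blockwise $S_r(v,w)S_r(w,v)^T=\begin{pmatrix} a_0I & A\\ -B^T & b_0I\end{pmatrix}\begin{pmatrix} b_0I & -A\\ B^T & a_0I\end{pmatrix}$.
\begin{itemize}
\item The top-left block is $a_0b_0I+AB^T$.
\item The off-diagonal blocks are $-a_0A+a_0A=0$ and $-b_0B^T+b_0B^T=0$. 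These cancellations use only that the $a_0,b_0$ are central scalars, since $R$ is commutative.
\item The bottom-right block is $B^TA+a_0b_0I$.
\end{itemize}
By the two induction hypotheses, both diagonal blocks equal $(a_0b_0+\langle v_1,w_1\rangle)I=\langle v,w\rangle I$.

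The reversed product $S_r(w,v)^TS_r(v,w)$ is computed the same way. Its diagonal blocks are $a_0b_0I+AB^T$ and $B^TA+a_0b_0I$. Its off-diagonal blocks are $b_0A-b_0A=0$ and $a_0B^T-a_0B^T=0$. Hence it also equals $\langle v,w\rangle I_{2^r}$.

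\emph{Main obstacle.} There is no deep difficulty; the care lies in bookkeeping. First, one must correctly identify $S_r(w,v)$ and its transpose from the recursion, in particular that the lower-left block transposes to $-A$. Second, one must notice that the bottom-right block needs the ``reversed'' identity $B^TA=\langle v_1,w_1\rangle I$. This is exactly why the two identities must be carried together in the induction: trying to prove $S_rS_r^{\prime T}=\langle v,w\rangle I$ alone would stall at that block.
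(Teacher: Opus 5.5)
Your proof is correct. The paper gives no proof of this theorem; it simply quotes it from Suslin. So there is no in-paper argument to compare against. Your simultaneous induction on the block recursion is the standard way this identity is established, and the details check out:
$S_r(w,v)^T=\begin{pmatrix} b_0I & -A\\ B^T & a_0I\end{pmatrix}$ is the correct transpose.
The off-diagonal blocks cancel by commutativity.
The bottom-right block genuinely needs the reversed identity $B^TA=\langle v_1,w_1\rangle I$, which is why you carry both identities.
Starting from $S_0=(a_0)$, the recursion reproduces the displayed $S_1,S_2,S_3$.
What your route buys is a self-contained proof of the full two-sided identity, with signs, over any commutative ring.

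The paper only ever uses the special case $v=w=\mathbf{1}$ over $\mathbb{F}_2$, in Theorem~\ref{thm:1.4} and in the construction of $\mathcal{SRC}_r$. There your computation collapses to two lines. From $\mathcal{S}_r=\begin{pmatrix} I & \mathcal{S}_{r-1}\\ \mathcal{S}_{r-1} & I\end{pmatrix}$ one gets $\mathcal{S}_r^2=\begin{pmatrix} I+\mathcal{S}_{r-1}^2 & 0\\ 0 & I+\mathcal{S}_{r-1}^2\end{pmatrix}$, since $2\mathcal{S}_{r-1}=0$. Hence $\mathcal{S}_r^2$ alternates between $I$ and $0$, starting from $\mathcal{S}_0^2=(1)$, which is exactly $\langle\mathbf{1},\mathbf{1}\rangle I$.
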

We now discuss the special particular case of Suslin matrices $S_r(v,w)$ over the binary field $\mathbb{F}_2$ with respect to pair $(v,w)$ such that $v$ and $w$ are all $1$ vectors, i.e., $v_{1,r+1}=(1,1,\dots,1)$ and $w_{1,r+1}=(1,1,\dots,1)$. Clearly, the inner product of $v$ and $w$ in $\mathbb{F}^{r+1}_2$ is either $0$ or $1$ i.e.,
$\langle v,w\rangle= 
\begin{cases}
0, & \mbox{if }r~odd,\\
1, & \mbox{if }r~even.  
\end{cases}
$\\ 

Below are the examples of special particular case of $S_r(v,w)$ for $r=1,2$ and $3$.

$$S_1((1,1),(1,1))=
\begin{pmatrix}
 1 & 1 \\
 1 & 1 
 \end{pmatrix};~
S_2((1,1,1),(1,1,1))=
\begin{pmatrix}
 1 & 0 & 1 & 1 \\
 0 & 1 & 1 & 1\\
 1 & 1 & 1 & 0\\
 1 & 1 &0 & 1
 \end{pmatrix}
;$$
$$S_3((1,1,1,1),(1,1,1,1))=
\begin{pmatrix}
 1 & 0 & 0 & 0 & 1 & 0 & 1 & 1 \\
 0 & 1 & 0 & 0 & 0 & 1 & 1 & 1\\
 0 & 0 & 1 & 0 & 1 & 1 & 1 & 0\\
 0 & 0 & 0 & 1 & 1 & 1 &0 & 1\\
 1 & 0 & 1 & 1 & 1 & 0 & 0 & 0 \\
 0 & 1 & 1 & 1 & 0 & 1 & 0 & 0 \\
 1 & 1 & 1 & 0 & 0 & 0 & 1 & 0\\
 1 & 1 & 0 & 1& 0 & 0 & 0 & 1  
\end{pmatrix}.$$

From hereon, we denote the special particular case of Suslin matrix i.e., $S_r(\bf{1,1})$ by $\mathcal{S}_r$, where ${\bf{1}}$ is all one vector in ${\mathbb F}_2^{r+1}$. 

Note that $\mathcal{S}_r$ is a matrix of order $2^r$ and has only $0$ and $1$ as its entries and contains exactly $r+1$ ones in every row and column. 
Also note that $\mathcal{S}_r$ can be recursively constructed as follow:\\ 
$\mathcal{S}_r=\begin{pmatrix}
      I_{2^{r-1}} & \mathcal{S}_{r-1}\\
      \mathcal{S}_{r-1} & I_{2^{r-1}}
     \end{pmatrix}
$, where $I_{2^{r-1}}$ is the identity matrix of order $2^{r-1}$.\\
For example, $\mathcal{S}_2=\begin{pmatrix}
                  I_2 & \mathcal{S}_1\\
                  \mathcal{S}_1 & I_2
                  \end{pmatrix}
             ,~~$
             $\mathcal{S}_3=\begin{pmatrix}
                  I_{2^2} & \mathcal{S}_2\\
                  \mathcal{S}_2 & I_{2^2}
                  \end{pmatrix}
             ~~$ etc.\\
\\ 
We now study some properties of these special particular case of Suslin matrices.

\begin{lemma} \label{lem:1.2}
 $\mathcal{S}_r$ is symmetric for all $r$.
 \end{lemma}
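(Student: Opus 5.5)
We will prove the statement by induction on $r$, using the recursive block description
$$\mathcal{S}_r=\begin{pmatrix} I_{2^{r-1}} & \mathcal{S}_{r-1}\\ \mathcal{S}_{r-1} & I_{2^{r-1}}\end{pmatrix}$$
recorded just before the lemma. This description itself follows from Suslin's definition of $S_r(v,w)$. Over $\mathbb{F}_2$ we have $-1=1$, and $v=w=\mathbf{1}$ gives $v_1=w_1=\mathbf{1}$. Hence the lower-left block $-S_{r-1}(w_1,v_1)^T$ becomes $\mathcal{S}_{r-1}^T$. So, strictly speaking, the recursion coming directly from the definition is
$$\mathcal{S}_r=\begin{pmatrix} I & \mathcal{S}_{r-1}\\ \mathcal{S}_{r-1}^T & I\end{pmatrix},$$
and we will work from this form. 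That way the argument does not presuppose the symmetric version stated in the paper.

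For the base case we take $r=0$ (or $r=1$). Here $\mathcal{S}_0=(1)$ and $\mathcal{S}_1=\begin{pmatrix}1&1\\1&1\end{pmatrix}$, both visibly symmetric.

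For the inductive step, assume $\mathcal{S}_{r-1}^T=\mathcal{S}_{r-1}$. The transpose of a $2\times 2$ block matrix is obtained by transposing each block and swapping the off-diagonal positions, so
$$\mathcal{S}_r^T=\begin{pmatrix} I^T & (\mathcal{S}_{r-1}^T)^T\\ \mathcal{S}_{r-1}^T & I^T\end{pmatrix}=\begin{pmatrix} I & \mathcal{S}_{r-1}\\ \mathcal{S}_{r-1}^T & I\end{pmatrix}=\mathcal{S}_r.$$
Thus $\mathcal{S}_r$ is symmetric. Note that this computation does not even use the inductive hypothesis. The hypothesis is only what allows us to replace $\mathcal{S}_{r-1}^T$ by $\mathcal{S}_{r-1}$, which yields the symmetric recursion displayed in the paper.

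The only step needing care is the reduction from the general Suslin recursion to the binary all-ones case. There we must check two things: that the sign disappears in characteristic $2$, and that the swap $S_{r-1}(w_1,v_1)$ versus $S_{r-1}(v_1,w_1)$ is harmless because $v_1=w_1$. Once these are verified, the rest is the routine block-transpose identity above.
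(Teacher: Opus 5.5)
Your proof is correct, and it differs from the paper's in a small but real way.

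The paper starts from the symmetric recursion $\mathcal{S}_r=\begin{pmatrix} I & \mathcal{S}_{r-1}\\ \mathcal{S}_{r-1} & I\end{pmatrix}$ stated just before the lemma. Its inductive step genuinely needs the hypothesis $\mathcal{S}_{k-1}^T=\mathcal{S}_{k-1}$ to turn the transposed off-diagonal blocks back into $\mathcal{S}_{k-1}$.

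You instead start from Suslin's defining recursion specialized to $v=w=\mathbf{1}$ over $\mathbb{F}_2$, whose lower-left block is $\mathcal{S}_{r-1}^T$. The block transpose then returns the same matrix at once. So the induction is superfluous, and symmetry holds directly for every $r\ge 0$; you can drop the induction framing.

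What your route buys is logical care. The paper's symmetric recursion is valid only once one knows $\mathcal{S}_{r-1}$ is symmetric, and the paper leaves this implicit. That is harmless if its argument is read as a simultaneous induction. Your proof establishes symmetry first and obtains the symmetric recursion as a consequence. At no extra cost, the same computation shows that $S_r(v,v)$ is symmetric over any commutative ring of characteristic $2$.
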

\begin{proof}
 We prove this result by induction on $r$. For $r=1$, $\mathcal{S}_1=\begin{pmatrix}
   1 & 1\\
  1 & 1                                                                                                                                                                                                                                \end{pmatrix},$ which is clearly a symmetric matrix.
  Further we assume that the result is true for $r=k-1$, i.e., $\mathcal{S}_{k-1}^T=\mathcal{S}_{k-1}$ and we prove the result for $r=k$. By definition,
  \begin{align*}
   \mathcal{S}_k^T &=                                                                                                                                                                                                                                   \begin{pmatrix}
   I_{2^{k-1}} & \mathcal{S}_{k-1}\\
  \mathcal{S}_{k-1} & I_{2^{k-1}}                                                                                                                                                                                                                          \end{pmatrix}^T 
  =                                                                                                                                                                                                                                   \begin{pmatrix}
   I_{2^{k-1}} & \mathcal{S}_{k-1}^T\\
  \mathcal{S}_{k-1}^T & I_{2^{k-1}}                                                                                                                                                                                                                          \end{pmatrix}
  =                                                                                                                                                                                                                                   \begin{pmatrix}
   I_{2^{k-1}} & \mathcal{S}_{k-1}\\
  \mathcal{S}_{k-1} & I_{2^{k-1}}                                                                                                                                                                                                                          \end{pmatrix} 
  =\mathcal{S}_k.
  \end{align*}
\end{proof}
From Theorem, \ref{thm:1.1} it follows that, $det(\mathcal{S}_r)=1$ when $r$ is even and is equal to zero when $r$ is odd. Therefore, $\mathcal{S}_r$ is non-singular whenever $r$ is even. 

\begin{thm}
$\mathcal{S}_r$ is nilpotent of index $2$ whenever $r$ is odd and it is an involution whenever $r$ is even.
\label{thm:1.4}
\end{thm}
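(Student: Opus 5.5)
We will show that $\mathcal{S}_r^2 = \langle \mathbf{1},\mathbf{1}\rangle I_{2^r}$ over $\mathbb{F}_2$. By the parity computation recalled above, this means $\mathcal{S}_r^2=0$ for $r$ odd and $\mathcal{S}_r^2=I_{2^r}$ for $r$ even.

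The quickest route is through Theorem \ref{thm:1.1}. Take $v=w=\mathbf{1}\in\mathbb{F}_2^{r+1}$. Then $S_r(v,w)=S_r(w,v)=\mathcal{S}_r$, and the identity reads $\mathcal{S}_r\mathcal{S}_r^T=\langle \mathbf{1},\mathbf{1}\rangle I_{2^r}$. Lemma \ref{lem:1.2} gives $\mathcal{S}_r^T=\mathcal{S}_r$, so $\mathcal{S}_r^2=\langle \mathbf{1},\mathbf{1}\rangle I_{2^r}$. Moreover, $\langle \mathbf{1},\mathbf{1}\rangle=r+1 \pmod 2$, which is $0$ for $r$ odd and $1$ for $r$ even. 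There is one subtlety: over $\mathbb{F}_2$ the signs $-b_i$ in Suslin's general definition become $+b_i$. This is exactly why the specialized matrix equals $\mathcal{S}_r$ as given by the simplified recursion, so the identity transfers without change.

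As an independent check, I would also give a direct inductive proof from the recursion $\mathcal{S}_r=\begin{pmatrix} I & \mathcal{S}_{r-1}\\ \mathcal{S}_{r-1} & I\end{pmatrix}$. Block multiplication gives
\[
\mathcal{S}_r^2=\begin{pmatrix} I+\mathcal{S}_{r-1}^2 & 2\mathcal{S}_{r-1}\\ 2\mathcal{S}_{r-1} & \mathcal{S}_{r-1}^2+I\end{pmatrix}=\begin{pmatrix} I+\mathcal{S}_{r-1}^2 & 0\\ 0 & I+\mathcal{S}_{r-1}^2\end{pmatrix}
\]
in characteristic $2$. The base case is $\mathcal{S}_1^2=\begin{pmatrix}1&1\\1&1\end{pmatrix}^2=0$. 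If $\mathcal{S}_{r-1}^2=0$, then $\mathcal{S}_r^2=I$; if $\mathcal{S}_{r-1}^2=I$, then $\mathcal{S}_r^2=0$. So the two cases alternate with the parity of $r$ by induction.

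Finally, ``nilpotent of index $2$'' requires $\mathcal{S}_r\neq 0$, which holds since every row contains $r+1\ge 2$ ones. ``Involution'' for $r$ even is exactly $\mathcal{S}_r^2=I$; when $r$ is even we have $r\ge 2$, so $\mathcal{S}_r\ne I$, which rules out the trivial case. I expect no real obstacle. The only point needing care is justifying that the sign conventions vanish over $\mathbb{F}_2$; the direct block computation avoids this issue entirely, so I would present that as the main argument.
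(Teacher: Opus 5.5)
Your first route is exactly the paper's proof. It specializes Theorem~\ref{thm:1.1} to $v=w=\mathbf{1}$, uses Lemma~\ref{lem:1.2} to rewrite $\mathcal{S}_r\mathcal{S}_r^T$ as $\mathcal{S}_r^2$, reads off $\langle \mathbf{1},\mathbf{1}\rangle = r+1 \bmod 2$, and gets $\mathcal{S}_r\neq 0$ from the $r+1$ ones per row.

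The block induction you intend to present as the main argument is a genuinely different and correct proof. Since the off-diagonal blocks are $2\mathcal{S}_{r-1}=0$, you get $\mathcal{S}_r^2=\mathrm{diag}(I+\mathcal{S}_{r-1}^2,\,I+\mathcal{S}_{r-1}^2)$. Hence $\mathcal{S}_r^2$ alternates between $0$ and $I$, starting from $\mathcal{S}_1^2=0$. What this buys:
\begin{itemize}
\item Given the recursion as stated in the paper, it is self-contained; it makes no appeal to Suslin's identity over a general ring.
\item It makes the parity alternation transparent.
\item Its diagonal block $I+\mathcal{S}_{r-1}^2$ is literally $G_rG_r^T$, so it also directly explains the later self-orthogonality claim $G_rG_r^T=0$ for odd $r$.
\end{itemize}
The paper's argument is shorter and conceptually presents the dichotomy as a specialization of $S_r(v,w)S_r(w,v)^T=\langle v,w\rangle I_{2^r}$. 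However, it rests on a cited theorem.

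One small correction to your framing: the block computation does not avoid the sign question entirely; it relocates it. The simplified recursion, with $\mathcal{S}_{r-1}$ in both off-diagonal blocks, comes from Suslin's recursion by using $-1=1$ in $\mathbb{F}_2$ together with $\mathcal{S}_{r-1}^T=\mathcal{S}_{r-1}$. Since the paper states this recursion explicitly, relying on it is legitimate.
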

\begin{proof}
 By construction, each row and each column of $\mathcal{S}_r$ has $r+1$ ones, which implies $\mathcal{S}_r\neq {\bf{0}}.$ 
 By Theorem $\ref{thm:1.1}$, $\mathcal{S}_r\mathcal{S}_r^T=\langle {\bf{1}},{\bf{1}}\rangle I_{2^r}$. Using the fact that $\mathcal{S}_r$ is symmetric from Lemma \ref{lem:1.2}, we write, $\mathcal{S}_r\mathcal{S}_r=\langle {\bf{1}},{\bf{1}}\rangle I_{2^r}$
  i.e., $\mathcal{S}_r^2=\langle {\bf{1}},{\bf{1}}\rangle I_{2^r}.$ 
 If $r$ is odd, then $\langle {\bf{1}},{\bf{1}}\rangle=0$ and if $r$ is even, then $\langle {\bf{1}},{\bf{1}}\rangle=1.$ Hence,
 $$\mathcal{S}^2_r =\begin{cases} \mbox{ }0, & \mbox{if }r \mbox{ } is~odd,  \\ \mbox{ }1, & \mbox{if } r \mbox{ }is~even.  
         \end{cases}
$$

It follows that $\mathcal{S}_r$ is nilpotent of index $2$ when $r$ is odd and $\mathcal{S}_r$ is an involution matrix whenever $r$ is even.
\end{proof}
 
From Theorem \ref{thm:1.4}, 
it follows that for odd $r,$ \{$\mathcal{S}_r\}$ is a family of nilpotent matrices of index $2$.
\begin{rem} \label{rem2}
 As $\mathcal{S}_r^2=I_{2^r}$ whenever $r$ is even it follows that any two distinct rows of $\mathcal{S}_r$ are orthogonal (i.e., their dot product is zero) over $\mathbb{F}_2$.
\end{rem}

\begin{lemma} \label{lem:1.3}
 For $r$ even, if $R_j$ is the $j$-th row of $\mathcal{S}_r$ and $e_j$ is the $j$-th standard basis vector in $\mathbb{F}_2^{\,r+1}$ then we have : $$\displaystyle\sum_{i \in \operatorname{supp}(R_j)} R_i = e_j.$$
 
\end{lemma}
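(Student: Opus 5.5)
The plan is to read the identity directly off the matrix equation $\mathcal{S}_r^2 = I_{2^r}$, which holds for even $r$ by Theorem \ref{thm:1.4}. (Here $e_j$ must be the $j$-th standard basis vector of $\mathbb{F}_2^{\,2^r}$, since the rows $R_i$ have length $2^r$; the statement's $\mathbb{F}_2^{\,r+1}$ should be read that way.)

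The key observation is that the $j$-th row of a matrix product $AB$ equals $\sum_i A_{ji} B_i$, where $B_i$ denotes the $i$-th row of $B$. Taking $A = B = \mathcal{S}_r$, the $j$-th row of $\mathcal{S}_r^2$ is
\[
\sum_{i=1}^{2^r} (\mathcal{S}_r)_{ji} R_i .
\]
Over $\mathbb{F}_2$ the coefficient $(\mathcal{S}_r)_{ji}$ is $1$ exactly when $i \in \operatorname{supp}(R_j)$ and $0$ otherwise. Hence this row equals $\sum_{i\in\operatorname{supp}(R_j)} R_i$.

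On the other hand, Theorem \ref{thm:1.4} gives $\mathcal{S}_r^2 = I_{2^r}$ for even $r$, and the $j$-th row of $I_{2^r}$ is $e_j$. Comparing the two expressions for the $j$-th row yields $\sum_{i \in \operatorname{supp}(R_j)} R_i = e_j$, as claimed.

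Alternatively, one can work coordinatewise. The $k$-th entry of the sum is $\sum_{i\in\operatorname{supp}(R_j)} (\mathcal{S}_r)_{ik}$. Since $\mathcal{S}_r$ is symmetric by Lemma \ref{lem:1.2}, $(\mathcal{S}_r)_{ik} = (\mathcal{S}_r)_{ki}$, so this entry equals $\langle R_j, R_k\rangle$. By Remark \ref{rem2} this inner product is $0$ for $k\neq j$. For $k = j$ it equals $r+1 \equiv 1 \pmod 2$, because each row of $\mathcal{S}_r$ contains exactly $r+1$ ones.

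There is no genuine obstacle here. The only point needing care is the identification of the index sets and the vector space: the support is taken inside $\{1,\dots,2^r\}$, and the sum lives in $\mathbb{F}_2^{2^r}$.
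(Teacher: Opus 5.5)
Your proposal is correct. Your alternative coordinatewise argument is essentially the paper's proof. The paper fixes a coordinate $k$ and uses symmetry (Lemma \ref{lem:1.2}) to rewrite $x_k$ as $|\operatorname{supp}(R_j)\cap\operatorname{supp}(R_k)|$. It then splits into two cases: $k=j$, where the row weight $r+1$ is odd, and $k\neq j$, where the overlaps are even by Remark \ref{rem2}.

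Your main route is more direct. You observe that $\sum_{i\in\operatorname{supp}(R_j)} R_i=\sum_i(\mathcal{S}_r)_{ji}R_i$ is literally the $j$-th row of $\mathcal{S}_r\cdot\mathcal{S}_r$. The lemma is then a row-by-row restatement of $\mathcal{S}_r^2=I_{2^r}$ from Theorem \ref{thm:1.4}. This needs no case split, no appeal to row weights, and no separate use of symmetry or Remark \ref{rem2}; symmetry enters only through the proof of Theorem \ref{thm:1.4}.

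The paper's version in effect recomputes the entries $(\mathcal{S}_r^2)_{jk}$ by hand. Its only advantage is that it makes the combinatorial content explicit (odd row weights, even pairwise overlaps of supports), which fits how the lemma is later used in Theorem \ref{thm:1.3}. Logically it adds nothing.

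You are also right that $e_j$ must lie in $\mathbb{F}_2^{\,2^r}$, not $\mathbb{F}_2^{\,r+1}$. The paper's proof repeats this slip when it takes $x\in\mathbb{F}_2^{\,r+1}$.
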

\begin{proof}
 Let $x \in \mathbb{F}_2^{\,r+1}$ be such that
$
\displaystyle\sum_{i \in \operatorname{supp}(R_j)} R_i = x.
$
Then the $k^{th}$ coordinate of $x$ is
$\displaystyle
x_k = \sum_{i \in \operatorname{supp}(R_j)} (R_i)_k,
$ where $(R_i)_k$ denotes the $k^{th}$ coordinate of row $R_i.$ 
Since $\mathcal{S}_r$ is symmetric for all $r$ (see Lemma \ref{lem:1.2}), we have
$
(R_i)_k = (R_k)_i.
$ Now we make the following cases.\\
\noindent{Case 1:} $k = j$.

\noindent It follows that 
$
\displaystyle x_j = \sum_{i \in \operatorname{supp}(R_j)} (R_j)_i.
$
Since $(R_j)_i = 1,$ exactly for $i \in \operatorname{supp}(R_j)$. As each row of 
$\mathcal{S}_r$ has weight $r+1$ and $r$ is even we get
$\displaystyle
x_j = \sum_{i \in \operatorname{supp}(R_j)} (R_j)_i 
= r+1 \equiv 1 \pmod{2}.
$ As $x_j\in \mathbb{F}_2,~x_j=1.$

\noindent {Case 2:} $k \neq j$.

When $k \neq j$, only those indices $i$ such that 
$
i \in \operatorname{supp}(R_j) 
 ~\text{and}~
k \in \operatorname{supp}(R_i)
$ contribute in a nontrivial manner to the sum $\displaystyle
x_k = \sum_{i \in \operatorname{supp}(R_j)} (R_i)_k
$.
Symmetry of $\mathcal{S}_r$ implies that
$
k \in \operatorname{supp}(R_i)
~ \text{if and only if} ~
i \in \operatorname{supp}(R_k).
$ 
Therefore, $x_k = \left| \operatorname{supp}(R_j) \cap \operatorname{supp}(R_k) \right|.$
From Theorem \ref{thm:1.4} and Remark \ref{rem2}, distinct rows of $\mathcal{S}_r$ have even dot product for even $r$. Therefore, $\left| \operatorname{supp}(R_j) \cap \operatorname{supp}(R_k) \right|$ is even.  Hence, $x_k \equiv 0 \pmod{2},$ so $x_k = 0$ as $x_k\in \mathbb{F}_2.$ 

Thus, $x = e_j.$ This completes the proof.
\end{proof}

\begin{thm}  \label{thm:1.3}
Let $\mathcal{S}_r=\begin{pmatrix}
A\\B                                                                                \end{pmatrix},$ 
where $A$ and $B$ are block sub-matrices of $\mathcal{S}_r$ of size $(2^{r-1}\times 2^r)$ each. Then for $r$ odd, each row of block $B$ is a linear combination of exactly $r$ rows of block $A$ and vice versa.
\end{thm}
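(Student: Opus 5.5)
The plan is to use the recursive block form of $\mathcal{S}_r$ together with Theorem~\ref{thm:1.4} applied to $\mathcal{S}_{r-1}$. For $r$ odd, $r-1$ is even, and the recursion gives
$$\mathcal{S}_r=\begin{pmatrix} I_{2^{r-1}} & \mathcal{S}_{r-1}\\ \mathcal{S}_{r-1} & I_{2^{r-1}}\end{pmatrix},$$
so that $A=\begin{pmatrix} I_{2^{r-1}} & \mathcal{S}_{r-1}\end{pmatrix}$ and $B=\begin{pmatrix} \mathcal{S}_{r-1} & I_{2^{r-1}}\end{pmatrix}$. Since $r-1$ is even, Theorem~\ref{thm:1.4} gives $\mathcal{S}_{r-1}^2=I_{2^{r-1}}$. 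For $r=1$ this reads $\mathcal{S}_0=(1)$, which is trivially an involution.

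The key computation is left multiplication by $\mathcal{S}_{r-1}$:
$$\mathcal{S}_{r-1}A=\begin{pmatrix}\mathcal{S}_{r-1} & \mathcal{S}_{r-1}^2\end{pmatrix}=\begin{pmatrix}\mathcal{S}_{r-1} & I_{2^{r-1}}\end{pmatrix}=B,$$
and symmetrically
$$\mathcal{S}_{r-1}B=\begin{pmatrix}\mathcal{S}_{r-1}^2 & \mathcal{S}_{r-1}\end{pmatrix}=A.$$
Reading these identities row by row, the $j$-th row of $B$ equals $\sum_{i\in\operatorname{supp}(R_j)}A_i$, where $R_j$ is the $j$-th row of $\mathcal{S}_{r-1}$ and $A_i$ is the $i$-th row of $A$. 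The same holds with the roles of $A$ and $B$ exchanged. This is essentially the content of Lemma~\ref{lem:1.3} lifted one level up the recursion.

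To count the rows used, recall that every row of $\mathcal{S}_{r-1}$ contains exactly $(r-1)+1=r$ ones. Hence each row of $B$ is a sum of exactly $r$ rows of $A$, and each row of $A$ is a sum of exactly $r$ rows of $B$.

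The only point needing care is the word ``exactly'': we must rule out a different representation with another number of rows. For this, note that the rows of $A$ are linearly independent because of the identity block $I_{2^{r-1}}$ in the first half of the columns. Likewise, the rows of $B$ are independent because of the identity block in the second half. Therefore the expression of a vector in their span as a combination of rows is unique, and in particular the number of rows appearing is forced to be $r$. This independence argument is the main (though mild) obstacle; the rest is direct block multiplication.
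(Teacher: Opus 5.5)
Your proof is correct and follows the paper's route: the paper writes row $j$ of $B$ as $(R_j,e_j)=\sum_{i\in\operatorname{supp}(R_j)}(e_i,R_i)$ using Lemma~\ref{lem:1.3}, which is the row-by-row form of your identity $\mathcal{S}_{r-1}A=B$ from $\mathcal{S}_{r-1}^2=I$, and then counts $|\operatorname{supp}(R_j)|=r$. Two of your steps go beyond what the paper's proof states explicitly: the linear independence of the rows of $A$ (and of $B$), which makes the representation unique so that ``exactly $r$'' is well defined, and the separate check of the case $r=1$.
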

\begin{proof}
From the construction of special particular case of Suslin matrix, $\mathcal{S}_r=\begin{pmatrix}
                  I_{2^{r-1}} & \mathcal{S}_{r-1}\\
                  \mathcal{S}_{r-1} & I_{2^{r-1}}
                 \end{pmatrix} =\begin{pmatrix}
A\\B                                                                                \end{pmatrix}$, where $A=\begin{pmatrix}I_{2^{r-1}} & \mathcal{S}_{r-1} \end{pmatrix}$ and $B=\begin{pmatrix} \mathcal{S}_{r-1} & I_{2^{r-1}} \end{pmatrix}$.

For $1\leq i \leq 2^{r-1}$, the $i^{th}$ row vector of $A$ is of the type $R'_i=(e_i,R_i)$ and that of $B$ is of the type $R''_i=(R_i,e_i)$, where $e_i$ is $i^{th}$ row of the identity matrix $I_{2^{r-1}}$ and $R_i$ is the $i^{th}$ row of $\mathcal{S}_{r-1}.$

For each $j$, $1\leq j \leq 2^{r-1}$, the $j^{th}$ row of $\mathcal{S}_{r-1}$ can be written as $\displaystyle R_j=\sum_{i\in Supp(R_j)} e_i$. From Lemma \ref{lem:1.3}, if $r$ is odd then $j^{th}$ row of $I_{2^{r-1}}$ can be written as  $\displaystyle e_j=\sum_{i\in Supp(R_j)} R_i.$ Therefore, $j^{th}$ row of $B,$  $\displaystyle R''_j=(R_j,e_j)=\left (\sum_{i\in Supp(R_j)} e_i~,~ \sum_{i\in Supp(R_j)} R_i \right ).$ This implies that the $j^{th}$ row of $B$ is a sum of as many rows of $A$ as the size of $Supp(R_j)$. 
As $\mathcal{S}_{r-1}$ has $r$ ones in each row, size of $Supp(R_j)$ is always $r$ for all $j.$
Hence, each row of block $B$ is linear combination of exactly $r$ rows of block $A$.
Similarly we can prove that each row of block $A$ is linear combination of exactly $r$ rows of block $B$.
This completes the proof.
\end{proof}

 \section{Suslin-Rao Codes}
 
 We recall a construction from the paper of Paul Hurley and Ted Hurley in (see \cite{PT1}, \cite{PT2}), where $(n,r)$ linear code was constructed from units in a commutative ring $R$ with unity as follows: 
let $U=\begin{pmatrix}
A\\B                                                                                \end{pmatrix}
,$ where $A$ is $r\times n$ matrix over $R$ and $B$ is $(n-r)\times n$ matrix over $R$. Also, let $V=\begin{pmatrix}
C & D                                                                                \end{pmatrix},$ 
where $C$ is $n\times r$ and $D$ is $n\times (n-r)$ matrix. Suppose $UV=I$ in the set $M_n(R)$ of matrices over a commutative ring $R$ with unity. Note $UV=I$ implies $AD=0$. A code constructed by considering the matrix $A$ as a generator matrix will be a $(n,r)$ code with check matrix $D^T$. Note that we can take $r$ as long as $1\leq r \leq n-1$ (If $r=0$, then $A$ is trivial and does not define a code and if $r=n$, then $B$ and $D$ do not exist, making $U$ and $V$ trivial).

We take this idea further for the case when $UV={0}$, where ${0}$ is the zero matrix. In this case, we again have $AD=0$, but a fundamental difference is that the product $UV$ does not give the identity matrix. It suggests a situation where all row space of $U$ is mapped to zero by $V$. This implies that all rows of $A$ are mapped to zero by 
$D$, which means that $A$ is not necessarily full-rank. If $A$ is not full-rank, it cannot generate a valid $(n,r)$-code because the number of independent codewords may be less than $r$. However if rows of $A$ are linearly independent,  then we can expect to get a $(n,r)$-linear code.

\subsection{Construction of Suslin-Rao Codes} \label{sec:construction}
We now use the above technique to construct linear codes from the Suslin matrices. 
We consider special particular case of Suslin matrix $\mathcal{S}_r$. 
From Theorem \ref{thm:1.1}, we have
$$\mathcal{S}_r\mathcal{S}_r^T= \begin{cases} \mbox{ }0, & \mbox{if }r~is~odd,  \\ 
\mbox{ }I_{2^r}, & \mbox{if } r ~is~even.  
         \end{cases}
$$\\
We now divide $\mathcal{S}_r=\begin{pmatrix}
A\\B                                                                                \end{pmatrix}
$ and divide $\mathcal{S}_r^T=\begin{pmatrix}
C & D                                                                                \end{pmatrix}$, where $A$ is $k\times n$ matrix, $B$ is $(n-k)\times n$ matrix, $C$ is $n\times k$ matrix and 
$D$ is $n \times (n-k)$ matrix. Note that $n=2^r$ and we keep $k$ strictly equal to $2^{r-1}$. We have $\mathcal{S}_r\mathcal{S}_r^T$ is either $\bf{0}$ or $I_{2^r},$ in both cases $AD=0.$ From the recursive definition of $\mathcal{S}_r$, $A=(I_{2^{r-1}}~|~\mathcal{S}_{r-1})$. It is easy to see that rows of matrix $A$ are linearly independent. We use matrix $A$ as generator matrix and corresponding matrix $D^T=(\mathcal{S}_{r-1}~|~I_{2^{r-1}})$ as parity check matrix to generate $(n,k)$ linear code. Hence forth we will call this code as Suslin-Rao code and we denote it by $\mathcal{SRC}_r.$ Clearly $\mathcal{SRC}_r$ is $(n,k)=(2^r,2^{r-1})$ linear code with code rate $k/n=1/2$. 

Let us illustrate this with the help of examples.
Below are the examples of $\mathcal{SRC}_r$ for $r=1,2.$\\
For $r=1$, $\mathcal{S}_1=
\begin{pmatrix}
 1 & 1 \\
 1 & 1 
 \end{pmatrix}
$,
thus $G_1=A=\begin{pmatrix}
 1 & 1
 \end{pmatrix}
$ and $\mathcal{SRC}_1=\{11,00\}$.\\
For $r=2$, 
$\mathcal{S}_2=
\begin{pmatrix}
 1 & 0 & 1 & 1 \\
 0 & 1 & 1 & 1\\
 1 & 1 & 1 & 0\\
 1 & 1 &0 & 1
 \end{pmatrix}
$, thus $G_2=A=\begin{pmatrix}
 1 & 0 & 1 & 1 \\
 0 & 1 & 1 & 1
 \end{pmatrix}=\begin{pmatrix}
 I_{2^1}~|~\mathcal{S}_1
 \end{pmatrix}
$ and 
$\mathcal{SRC}_2=\{1011,0111,1100,0000\}$.

Similarly, one can generate Suslin-Rao code $\mathcal{SRC}_r$ with generator matrix $G_r=\begin{pmatrix}                                               I_{2^{r-1}}~|~\mathcal{S}_{r-1}                                                            \end{pmatrix}.$

\begin{thm}(Minimum Hamming distance of the Suslin-Rao Code)\label{thm:2.1}\\
 The minimum Hamming distance of Suslin-Rao code $\mathcal{SRC}_r$ is: \\
 $$d_{min}(\mathcal{SRC}_r) =\begin{cases} \mbox{ }r+1, & \mbox{if }r \mbox{ } odd,  \\ \mbox{ }r & \mbox{if } r \mbox{ }even.  
         \end{cases}
$$ 
\end{thm}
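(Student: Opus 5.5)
The plan is to write codewords as $c=(x\mid x\mathcal{S}_{r-1})$ with $x\in\mathbb{F}_2^{2^{r-1}}$, and to study $f_m(x):=wt(x)+wt(x\mathcal{S}_m)$ for $m=r-1$. Thus $d_{min}(\mathcal{SRC}_r)=\delta_m:=\min_{x\neq 0} f_m(x)$, and the statement becomes: $\delta_m=m+2$ for $m$ even and $\delta_m=m+1$ for $m$ odd. I would prove the upper bounds by explicit codewords and the lower bounds by induction on $m$, using the recursion $\mathcal{S}_m=\begin{pmatrix} I & \mathcal{S}_{m-1}\\ \mathcal{S}_{m-1} & I\end{pmatrix}$.

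\emph{Upper bounds.} A row of $G_r$ has weight $1+(m+1)=m+2$, so $\delta_m\le m+2$ always. For $m$ odd, take $x=R_j$, the $j$-th row of $\mathcal{S}_m$. By Theorem \ref{thm:1.4} and Lemma \ref{lem:1.2} we have $x\mathcal{S}_m=(\mathcal{S}_m^2)_j=0$. This gives a codeword of weight $m+1$, so $\delta_m\le m+1$ (compare $1100\in\mathcal{SRC}_2$).

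\emph{Parity.} Every row of $\mathcal{S}_m$ has weight $m+1$. Hence $wt(x\mathcal{S}_m)\equiv (m+1)\,wt(x)\pmod 2$, since weight parity is additive over $\mathbb{F}_2$. For $m$ even this gives $f_m(x)\equiv 2\,wt(x)\equiv 0$, so every codeword has even weight. It therefore suffices to show $\delta_m\ge m+1$ in both parities; for $m$ even, evenness then upgrades this to $m+2$.

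\emph{Induction for $\delta_m\ge m+1$.} The base cases $m=0,1$ are direct checks. Write $x=(x_1,x_2)$ and $S'=\mathcal{S}_{m-1}$. Then
$$f_m(x)=wt(x_1)+wt(x_2)+wt(x_1+x_2S')+wt(x_2+x_1S').$$
If $x_2=0$, this equals $2wt(x_1)+wt(x_1S')\ge f_{m-1}(x_1)+1\ge m+1$; the case $x_1=0$ is symmetric. If both are nonzero, the triangle inequality $wt(x_1)+wt(x_1+x_2S')\ge wt(x_2S')$ gives $f_m(x)\ge f_{m-1}(x_2)+wt(x_2+x_1S')\ge m+wt(x_2+x_1S')$, and symmetrically $f_m(x)\ge m+wt(x_1+x_2S')$. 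So we gain the needed $+1$ unless $x_2=x_1S'$ and $x_1=x_2S'$ simultaneously.

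This degenerate case is the main obstacle. It forces $x_2=x_2S'^2$. When $m-1$ is odd, $S'^2=0$ by Theorem \ref{thm:1.4}, so $x_2=0$, a contradiction; hence the $+1$ always occurs and $\delta_m\ge m+1$ for $m$ even. When $m-1$ is even, $S'^2=I$ and the case genuinely occurs. There, however, $f_m(x)=wt(x_1)+wt(x_2)=f_{m-1}(x_1)\ge\delta_{m-1}=(m-1)+2=m+1$ by the inductive hypothesis for even index, which is exactly the target for $m$ odd.

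Combining the three parts yields $\delta_m=m+1$ for $m$ odd (that is, $d=r$ for $r$ even) and $\delta_m=m+2$ for $m$ even (that is, $d=r+1$ for $r$ odd). The care needed is to run the induction on the pair of statements, tracking both parities of $m$ together, since the odd case uses the even-index hypothesis in its degenerate branch.
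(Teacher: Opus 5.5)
Your proof is correct, and it takes a genuinely different route from the paper's.

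\textbf{How the paper argues.} The paper treats the two parities separately.
\begin{itemize}
\item For odd $r$ it inducts in steps of two ($r=k\to k+2$, $k$ odd). It splits $x=(x_1,x_2)$ and uses $\mathcal{S}_k^2=0$ to get $x\mathcal{S}_{k+1}\neq 0$. It then asserts that $x_1+x_2\mathcal{S}_k$ and $x_1\mathcal{S}_k+x_2$ are codewords of $\mathcal{SRC}_k$, which yields $wt(x\mathcal{S}_{k+1})\ge k+1$ and hence $wt(c)\ge 2+(k+1)$.
\item For even $r$ it uses Theorem~\ref{thm:1.3}: the row space of $\mathcal{S}_{r-1}$ is $\mathcal{SRC}_{r-1}$, so the bound $r$ is imported from the odd case. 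The weight-$r$ word comes from an $r$-term row dependency in $\mathcal{S}_{r-1}$, and it is exactly your word $(R_j\mid 0)$.
\end{itemize}

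\textbf{How your route differs.} You run a single induction in unit steps on $m=r-1$ for the weaker bound $\delta_m\ge m+1$. The even-weight property for odd $r$ then supplies the final $+1$, and you never need Theorem~\ref{thm:1.3}.

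\textbf{What this buys you.} Your route gives real rigor in the odd case, where the paper's step does not hold as written.
\begin{itemize}
\item As $x_1$ varies, $x_1+x_2\mathcal{S}_k$ ranges over all of $\mathbb{F}_2^{2^k}$, so it is not a codeword of $\mathcal{SRC}_k$ in general.
\item The intermediate bound $wt(x\mathcal{S}_{k+1})\ge k+1$ is actually false. Take $x$ to be a row of the involution $\mathcal{S}_{k+1}$. Both halves of $x$ are then nonzero, but $x\mathcal{S}_{k+1}$ is a unit vector.
\item The conclusion $wt(c)\ge k+3$ survives only because weight moves between $x$ and $x\mathcal{S}_{k+1}$; in this example $wt(x)=k+2$. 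Your triangle-inequality step $wt(x_1)+wt(x_1+x_2S')\ge wt(x_2S')$ is exactly what accounts for that trade-off.
\end{itemize}
Your separate treatment of the degenerate case, via $S'^2\in\{0,I\}$ and the even-index hypothesis, closes the only remaining gap.

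\textbf{What the paper's organization offers.} It makes explicit the structural link between $\mathcal{SRC}_r$ and $\mathcal{SRC}_{r-1}$ for even $r$. Combined with the fact, left unstated in the paper, that the left kernel of $\mathcal{S}_{r-1}$ equals its row space, this gives a clean one-line lower bound in that case.
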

\begin{proof}
 We prove this result in two parts. In first part, we prove $d_{min}(\mathcal{SRC}_r)=r+1$ for odd $r$ and in second part we prove $d_{min}(\mathcal{SRC}_r)=r$ for even $r.$  

 \noindent{Case I:} $r$ is odd.\\
 We prove the result by induction.  When $r=1$, the generator matrix for $\mathcal{SRC}_1$ is $G_1=\begin{pmatrix} 1 & 1 \end{pmatrix}.$ Codewords generated from $G_1$ are $\{11,00\}.$ Clearly the minimum Hamming distance of $\mathcal{SRC}_1$ is $d_{min}(\mathcal{SRC}_1)=2$. Thus, the result is true for $r=1.$ We now assume that the result is true for an odd integer $r=k>1$ i.e., $d_{min}(\mathcal{SRC}_k)=k+1$ and we prove $d_{min}(\mathcal{SRC}_{k+2})=k+3.$

From the construction of Suslin-Rao code, generator matrix of $\mathcal{SRC}_{k+2}$ is given by $G_{k+2}= A_{k+2} = [I_{2^{k+1}} \mid \mathcal{S}_{k+1}].$ Therefore, codewords of $\mathcal{SRC}_{k+2}$ are of the form,  
$c =xG_{k+2}=x[I_{2^{k+1}} \mid \mathcal{S}_{k+1}]=(x,\ x \mathcal{S}_{k+1}) ~\text{for } x \in \mathbb{F}_2^{2^{k+1}},\ x \neq 0.
$

Let us fix some nonzero $x$. Then, $wt(c) = wt(x) + wt(x S_{k+1}).$ We now make cases on $wt(x)$. \\
\noindent {Case 1: } We assume $x=e_i$, for some $i$ i.e., $wt(x) = 1$.\\
Then $x S_{k+1}$ is the $i$-th row of $S_{k+1}$. As each row of $S_{k+1}$ has weight $k + 2$, 
$
wt(c) = 1 + (k + 2) = k + 3.
$ 
So, at least one codeword in $\mathcal{SRC}_{k+2}$ has weight $k + 3$.
\\
{Case 2: ${wt}(x) \geq 2.$}\\
Write $x=(x_1,x_2)\in \mathbb{F}_2^{2^k} \times \mathbb{F}_2^{2^k}.$ Therefore, 
$
x \mathcal{S}_{k+1} = (x_1,\ x_2)
\begin{pmatrix}
I_{2^k} & \mathcal{S}_k \\
\mathcal{S}_k & I_{2^k}
\end{pmatrix}
= \left(x_1 + x_2 \mathcal{S}_k,\ x_1 \mathcal{S}_k + x_2\right)
.$

\noindent We now make the following subcases:\\
\noindent{$(a)$} Both $x_1$ and $x_2$ are non-zero i.e., $x_1\neq 0$ and $x_2\neq 0.$
We claim that at least one of $x_1 + x_2 \mathcal{S}_k$ and $x_1 \mathcal{S}_k + x_2$ is nonzero. If both are zero i.e., $x_1 + x_2 \mathcal{S}_k=0$ and $x_1 \mathcal{S}_k + x_2=0$ then from first equation we get $x_1=x_2 \mathcal{S}_k$. From the second equation we get $(x_2 \mathcal{S}_k)\mathcal{S}_k+x_2=0$. It follows that, $x_2(\mathcal{S}_k^2+I)=0$. Since $k$ is odd, $\mathcal{S}_k^2=0$. Hence, $x_2=0$, a contradiction. 
Note that, each of the vectors $x_1 + x_2 \mathcal{S}_k$ and $x_1 \mathcal{S}_k + x_2$ is a codeword in $\mathcal{SRC}_k$. As $d_{min}(\mathcal{SRC}_k)=k+1,$ every non-zero codeword in $\mathcal{SRC}_k$ has weight at least $k + 1$. Hence, $
wt(x_1 + x_2 \mathcal{S}_k) \geq k + 1~{or}~wt(x_1 S_k + x_2) \geq k + 1.
$
This implies,
$$
wt(x S_{k+1}) \geq k + 1
\text{ and therefore, }wt(c) \geq 2 + (k + 1) = k + 3.
$$

\noindent(b) Exactly one of $x_1$ and $x_2$ is zero. Let us say $x_1=0$ and $x_2\neq 0$.\\
Clearly,  
$
x S_{k+1} = (x_2 S_k,\ x_2) \text{ which implies } wt(x S_{k+1}) = wt(x_2 S_k) + wt(x_2).$
As $x_2 \ne 0$ and $\mathcal{SRC}_k$ has minimum weight $k + 1$, we have
$
wt(x_2 S_k) + wt(x_2) > k + 1.$ This implies 
$
wt(c) = wt(x) + {wt}(x S_{k+1}) > 2 + (k + 1) = k + 3
.$

Hence, for all $x$ with ${wt}(x) > 1$, we have:
$wt(c) \geq k + 3.$ 
As $d_{min}=\displaystyle\min_{c\in C} wt(c)$ (see \cite{RR}), it is clear that minimum Hamming distance $d_{min}(\mathcal{SRC}_{k+2})=k+3$. Hence, $d_{\min}(\mathcal{SRC}_r) = r+1$ for all odd $r.$
\vspace{2mm}

\noindent{Case II: $r$ is even.}\\
Consider the Suslin-Rao code $\mathcal{SRC}_{r}$ with generator matrix $G_{r}.$ By Construction  
$$G_{r}=\begin{pmatrix}                                                                                                                             I_{2^{r-1}}~|~\mathcal{S}_{r-1}                                                                                                                        \end{pmatrix}=\begin{pmatrix}
I_{2^{r-1}}~|~\begin{pmatrix}                                                                                                                                  I_{2^{r-2}} & \mathcal{S}_{r-2}\\                                                                                                                        \mathcal{S}_{r-2} & I_{2^{r-2}}                                                                                                                                  \end{pmatrix}
\end{pmatrix}=\begin{pmatrix}
I_{2^{r-1}} ~|~ \begin{pmatrix}                                                                                                                                  A\\                                                                                                                        B                                                                                                                                  \end{pmatrix}
\end{pmatrix}.$$
The block submatrix $A=\begin{pmatrix} I_{2^{r-2}} & \mathcal{S}_{r-2} \end{pmatrix}$ of $\mathcal{S}_{r-1}$ is a generator $G_{r-1}$ of Suslin-Rao code $\mathcal{SRC}_{r-1}$.
As $r-1$ is odd, $d_{min}(\mathcal{SRC}_{r-1})=r$ and from Theorem \ref{thm:1.3}, every row vector of block submatrix $B$ is a linear combination of rows of $A=G_{r-1}$.  Thus, codewords generated from  $\begin{pmatrix}
I_{2^{r-1}} &|& \begin{pmatrix}                                                                                                                                  A\\                                                                                                                        B                                                                                                                                  \end{pmatrix}
\end{pmatrix}$ can not have minimum Hamming weight less than $r$. 
Now we check whether $\mathcal{SRC}_{r}$ has a codeword of Hamming weight equal to $r$.

From Theorem \ref{thm:1.3} we know, each row of $B$ is linear combination of $r-1$ rows of $A.$ Lets say row vectors $\{R_{A,1},\dots,R_{A,r-1}\}$ of matrix $A$ generates row vector $R_{B,i}$ of matrix $B$. Addition of these rows together contributes a zero vector. When we add corresponding rows of the matrix $G_{r}=\begin{pmatrix}                                                                                                                                  I_{2^{r-1}} &|& \mathcal{S}_{r-1}                                                                                                                        \end{pmatrix}=\begin{pmatrix}
I_{2^{r-1}} &|& \begin{pmatrix}                                                                                                                                  A\\                                                                                                                        B                                                                                                                                  \end{pmatrix}
\end{pmatrix},$ i.e., row vectors $\{G_1,\dots,G_{r-1}\}$ and $G_i$ we get a row vector of weight exactly $r$ ($r$ rows of first block of $G_{r}$ i.e., identity matrix contributes $r$ ones and second block contributes zero in this weight). This is the codeword in $\mathcal{SRC}_{r}$ with weight $r$. Therefore, $d_{min}(\mathcal{SRC}_{r})=\displaystyle\min_{c\in \mathcal{SRC}_r} wt(c)=r.$ 
This completes the calculation of minimum distance of $\mathcal{SRC}_r$ for $r$ even.                                                                                     
\end{proof}

By construction $\mathcal{SRC}_r$ is a linear code with $(n,r)=(2^r,2^{r-1})$. By Theorem \ref{thm:2.1}, for odd $r$, $\mathcal{SRC}_r$ is $[2^r,2^{r-1},r+1]$ linear block code and whenever $r$ is even $\mathcal{SRC}_r$ is $[2^r,2^{r-1},r]$ linear code.
\begin{rem}
 The Suslin-Rao code $\mathcal{SRC}_3$, is  equivalent to extended Hamming code $[8,4,4,]_2$ (As they have identical generator matrices).
\end{rem}

\subsection{Some Basic Properties of Suslin-Rao Codes}
\begin{prop}
 For $r$ odd, the Suslin-Rao code $\mathcal{SRC}_r$ is self-dual.
\end{prop}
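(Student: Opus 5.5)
The plan is to show that $\mathcal{SRC}_r \subseteq \mathcal{SRC}_r^\perp$ and then compare dimensions. The code has generator matrix $G_r=(I_{2^{r-1}}\mid \mathcal{S}_{r-1})$, so containment in the dual amounts to the single matrix identity $G_rG_r^T=0$ over $\mathbb{F}_2$.

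Expanding the product blockwise gives
$$G_rG_r^T = I_{2^{r-1}}I_{2^{r-1}}^T+\mathcal{S}_{r-1}\mathcal{S}_{r-1}^T = I_{2^{r-1}}+\mathcal{S}_{r-1}^2,$$
where Lemma \ref{lem:1.2} (symmetry of $\mathcal{S}_{r-1}$) lets us replace $\mathcal{S}_{r-1}^T$ by $\mathcal{S}_{r-1}$. Since $r$ is odd, $r-1$ is even. Theorem \ref{thm:1.4} then gives $\mathcal{S}_{r-1}^2=I_{2^{r-1}}$, so $G_rG_r^T=2I=0$ over $\mathbb{F}_2$. Hence every pair of rows of $G_r$ is orthogonal, including each row with itself, and by bilinearity $\mathcal{SRC}_r\subseteq\mathcal{SRC}_r^\perp$.

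For the reverse inclusion, note that the rows of $G_r$ are linearly independent because of the identity block. So $\dim\mathcal{SRC}_r=2^{r-1}$, and therefore $\dim\mathcal{SRC}_r^\perp=2^r-2^{r-1}=2^{r-1}$. A subspace of equal finite dimension must be the whole space, so $\mathcal{SRC}_r=\mathcal{SRC}_r^\perp$.

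There is one alternative route worth noting. The parity check matrix is $D^T=(\mathcal{S}_{r-1}\mid I)$, and $\mathcal{S}_{r-1}G_r=(\mathcal{S}_{r-1}\mid \mathcal{S}_{r-1}^2)=(\mathcal{S}_{r-1}\mid I)$. Since $\mathcal{S}_{r-1}$ is invertible for even $r-1$, $D^T$ and $G_r$ generate the same row space, which gives self-duality directly.

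There is no real obstacle here. The only point needing care is the parity shift: the relevant Suslin matrix is $\mathcal{S}_{r-1}$ with $r-1$ even, so it is the involution case of Theorem \ref{thm:1.4} that is used, not the nilpotent case.
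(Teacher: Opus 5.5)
Your proof is correct. It differs from the paper's mainly in the closing step.

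The paper gets $G_rG_r^T=0$ from the upper-left block $AA^T$ of $\mathcal{S}_r\mathcal{S}_r^T=\mathcal{S}_r^2=0$ (odd $r$). It records $\dim=n/2$ but does not conclude by counting dimensions. Instead it proves $\langle G_r\rangle=\langle H_r\rangle$ for $H_r=D^T=(\mathcal{S}_{r-1}\mid I_{2^{r-1}})$ via Theorem \ref{thm:1.3}, whose proof relies on Lemma \ref{lem:1.3}.

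You compute $G_rG_r^T=I+\mathcal{S}_{r-1}^2$ one level down. This is the same fact, since $I+\mathcal{S}_{r-1}^2$ is exactly the upper-left block of $\mathcal{S}_r^2$. You then finish with $C\subseteq C^\perp$ and $\dim C=\dim C^\perp=2^{r-1}$. This makes Theorem \ref{thm:1.3} unnecessary, and it does not even require knowing that $D^T$ is a parity-check matrix.

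Your alternative route is the paper's argument in matrix form. The identity $\mathcal{S}_{r-1}A=B$ is Theorem \ref{thm:1.3}, and $\mathcal{S}_{r-1}B=A$ (because $\mathcal{S}_{r-1}^{-1}=\mathcal{S}_{r-1}$) is its ``vice versa''. Each is obtained in one line rather than through Lemma \ref{lem:1.3}.

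One small detail concerns $r=1$. There $r-1=0$ lies outside the range $r\ge 1$ of Theorem \ref{thm:1.1}, on which the proof of Theorem \ref{thm:1.4} rests. Nothing breaks, though: $\mathcal{S}_0=(1)$, and $G_1G_1^T=1+1=0$ directly.
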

\begin{proof}
From the construction of Suslin-Rao codes discussed in Section \ref{sec:construction}, generator matrix of the code $\mathcal{SRC}_r$ is $G_r=A=\begin{pmatrix} I_{2^{r-1}} & S_{r-1}\end{pmatrix}$  and the corresponding parity check matrix is $H_r=D^T=\begin{pmatrix}S_{r-1} & I_{2^{r-1}}                  \end{pmatrix}$.  $G_r$ and $H_r$ are both $2^{r-1}\times 2^r$ matrices with full row rank. Therefore, $dim(G_r)=dim(H_r)=2^{r-1}.$ This is equal to $\frac{n}{2}$ as, $n=2^r.$  

From (\cite{Huff}) and (\cite{FN}), a linear code $C$ is self-orthogonal if and only if $GG^T=0$, where $G$ is generator matrix of $C$. From Theorem \ref{thm:1.1} and Theorem \ref{thm:1.4}, for odd $r$, $\mathcal{S}_r^2=0$. It follows that, $AA^T=G_rG_r^T=0$. Therefore, all codewords in $\mathcal{SRC}_r$ are  orthogonal to each other.  


Now we prove that the sets of codewords spanned by $G_r$ and $H_r$ are equal. For that we look into the recursive construction of special particular case of Suslin matrices, \\
$\mathcal{S}_r=\begin{pmatrix}A\\B \end{pmatrix}=\begin{pmatrix}
I_{2^{r-1}} & \mathcal{S}_{r-1}\\
\mathcal{S}_{r-1} & I_{2^{r-1}}
\end{pmatrix}$ and 
$\mathcal{S}_r^T=\begin{pmatrix}
C & D                                                                                \end{pmatrix}=\begin{pmatrix}
I_{2^{r-1}} & \mathcal{S}_{r-1}\\
\mathcal{S}_{r-1} & I_{2^{r-1}}
\end{pmatrix}.$\\
Clearly, 
the check matrix $H_r=D^T$, of code $\mathcal{SRC}_r$ is equal to block submatrix $B$ of $\mathcal{S}_r$.
By Theorem \ref{thm:1.3}, each row of matrix $H_r$ is a linear combination of rows of $G_r$ and vice versa whenever $r$ is odd. Hence both matrices spans the same set of codewords. Thus, $\langle G_r \rangle=\langle H_r\rangle$. This proves that for odd $r,$ Suslin-Rao code $\mathcal{SRC}_r$ is a self-dual code.
\end{proof}

\begin{prop}
 The Suslin-Rao code $\mathcal{SRC}_r$ satisfy the Singleton bound.
\end{prop}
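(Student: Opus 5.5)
The claim is that the parameters of $\mathcal{SRC}_r$ obey $d \le n-k+1$. The plan is to substitute the parameters and compare. By the construction in Section \ref{sec:construction} we have $n=2^r$ and $k=2^{r-1}$. Theorem \ref{thm:2.1} gives $d=r+1$ for odd $r$ and $d=r$ for even $r$. Hence $n-k+1 = 2^{r-1}+1$, and it suffices to show
$$d_{min}(\mathcal{SRC}_r)\le r+1 \le 2^{r-1}+1.$$
This reduces to the elementary inequality $r \le 2^{r-1}$ for all $r\ge 1$.

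The inequality $r\le 2^{r-1}$ will be proved by induction on $r$. For $r=1$ it reads $1\le 1$. Assuming $r\le 2^{r-1}$, we get $r+1\le 2^{r-1}+1\le 2^{r-1}+2^{r-1}=2^r$.

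Combining the two steps gives $d\le r+1\le 2^{r-1}+1=n-k+1$ in both parity cases. In the even case the bound is slightly better, namely $d=r\le 2^{r-1}<n-k+1$.

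As an optional remark, the argument also identifies when equality holds, which answers whether the code is MDS. Equality $d=n-k+1$ forces $r+1=2^{r-1}+1$ with $r$ odd, i.e.\ $r=2^{r-1}$, so $r=1$ (note $r=2$ also satisfies $r=2^{r-1}$ but is even, where $d=r<n-k+1$). Thus only $\mathcal{SRC}_1=\{00,11\}$, the repetition code, attains the bound, and for $r\ge 2$ the inequality is strict. This is consistent with the introduction's statement that the codes are not MDS.

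There is no real obstacle in this argument. The only point needing care is to invoke Theorem \ref{thm:2.1} separately for each parity of $r$ rather than using one uniform formula for $d$.
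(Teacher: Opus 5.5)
Your proposal is correct and is essentially the paper's proof: both substitute $n=2^r$, $k=2^{r-1}$ and the values of $d$ from Theorem~\ref{thm:2.1} to reduce the claim to $d\le 2^{r-1}+1$, which the paper asserts is ``easy to see'' and you justify by an explicit induction proving $r\le 2^{r-1}$. Your equality analysis (only $r=1$ attains the bound) is also correct, and in fact only the upper bound $d\le r+1$ is needed---immediate since each row of $G_r$ has weight $r+1$---so the argument does not depend on the lower-bound part of Theorem~\ref{thm:2.1}.
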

\begin{proof}
It is well known that any linear code with parameters $[n,k,d]$ satisfy the Singleton bound $d \leq n-k+1$ (see \cite{RR}).
 It gives upper bound on minimum distance $d$.
 We claim that, the minimum distance $d$ of Suslin-Rao code $\mathcal{SRC}_r$ satisfy this bound. 
 
 By construction, Suslin-Rao code $\mathcal{SRC}_r$ has $n=2^r$, $k=2^{r-1}$ and $d=r$ if $r$ is even and $d=r+1$ if $r$ is odd. Therefore, $n-k+1=2^r-2^{r-1}+1=
  2^{r-1}+1.$ It is easy to see that, in both cases i.e., for odd and even $r,$  $d\leq 2^{r-1}+1.$  Therefore, $d\leq n-k+1$ is always true for Suslin-Rao code $\mathcal{SRC}_r$ for all $r$. 
This proves that Suslin-Rao code satisfy the Singleton bound. 

\end{proof}

\begin{prop}
 The Suslin-Rao code $\mathcal{SRC}_r$ satisfy the Griesmer bound.
 
\end{prop}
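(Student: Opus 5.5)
Since $\mathcal{SRC}_r$ is a genuine binary linear code with parameters $[2^r,2^{r-1},d]$, the Griesmer bound holds for it automatically. Still, I would verify the inequality
$$n=2^r\;\geq\;\sum_{i=0}^{k-1}\left\lceil \frac{d}{2^i}\right\rceil,\qquad k=2^{r-1},$$
directly from the parameters fixed by Theorem \ref{thm:2.1}, in the same spirit as the Singleton bound proposition. In all cases $d\leq r+1$.

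The plan is to split the sum according to when $d/2^i$ drops to at most $1$. Let $m=\lceil \log_2 d\rceil$. Then for $i\geq m$ we have $d\leq 2^i$, so each such term equals exactly $1$. For $i<m$ I use the crude bound $\lceil d/2^i\rceil\leq d/2^i+1$. This gives
$$\sum_{i=0}^{k-1}\left\lceil \frac{d}{2^i}\right\rceil\;\leq\;\sum_{i=0}^{m-1}\left(\frac{d}{2^i}+1\right)+(k-m)\;<\;2d+k.$$
It therefore suffices to show $2d\leq 2^r-2^{r-1}=2^{r-1}$, i.e. $2(r+1)\leq 2^{r-1}$. Since $2^{r-1}$ grows exponentially and $2(r+1)$ linearly, a short induction shows this holds for all $r\geq 5$.

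The small cases $r=1,2,3,4$ I would check by hand, using $d=2,2,4,4$ respectively:
\begin{itemize}
\item $r=1$: $[2,1,2]$, and $2\geq 2$.
\item $r=2$: $[4,2,2]$, and $2+1=3\leq 4$.
\item $r=3$: $[8,4,4]$, and $4+2+1+1=8\leq 8$, with equality, consistent with the extended Hamming code remark.
\item $r=4$: $[16,8,4]$, and $4+2+1+1\cdot 5=12\leq 16$.
\end{itemize}

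The main obstacle is only bookkeeping. The crude bound $2d+k$ fails for small $r$, which forces the case check, and the odd/even distinction for $d$ has to be handled carefully. Using the uniform upper bound $d\leq r+1$ avoids splitting into odd and even cases, because the Griesmer sum is monotone increasing in $d$.
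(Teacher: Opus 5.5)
Your proof is correct and follows essentially the paper's route: split the Griesmer sum at the point where the terms $\lceil d/2^i\rceil$ become $1$, bound the earlier terms by $d/2^i+1$ to get a total strictly less than $2d+k$, compare with $n=2k$, and check the small values of $r$ by hand. The differences are minor. Your uniform bound $d\le r+1$ replaces the paper's separate even/odd treatment. Your explicit check at $r=4$ is also slightly cleaner than the paper's assertion $2^r>2r+2^{r-1}$ for all even $r>2$: at $r=4$ that is actually an equality, and the paper's argument survives only through its earlier strict inequality.
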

\begin{proof}
 It is well known from (\cite{Griesmer}) that, every $[n,k,d]$ linear code satisfy Griesmer bound i.e., $\displaystyle n\geq \sum_{i=0}^{k-1} \lceil\frac{d}{2^i}\rceil.$ We claim that, the length of the Suslin-Rao code also satisfy this bound. 
  By construction, Suslin-Rao code $\mathcal{SRC}_r$ has $n=2^r$, $k=2^{r-1}$ and $d=r$ if $r$ is even and $d=r+1$ if $r$ is odd. 
Therefore, when $r$ is even, 
\begin{align*}
 \displaystyle \sum_{i=0}^{k-1} \lceil\frac{d}{2^i}\rceil &= \sum_{i=0}^{k-1} \lceil\frac{r}{2^i}\rceil
 = \lceil\frac{r}{1}\rceil+\lceil\frac{r}{2^1}\rceil+\dots+\lceil\frac{r}{2^{k-1}}\rceil. 
\end{align*}

Here, if $2^i\leq r,$ then $\lceil\frac{r}{2^i}\rceil\leq \frac{r}{2^i}+1$ and if $2^i> r,$ then $\lceil\frac{r}{2^i}\rceil=1$. Let us assume $t$ is the largest integer power index such that $2^t\leq r$ i.e., $t\leq \lfloor log_2(r)\rfloor$ and then we split the right-hand sum over at $t$. 
Therefore,  
\begin{align*}
 \displaystyle \sum_{i=0}^{k-1} \lceil\frac{d}{2^i}\rceil 
 &=\sum_{i=0}^{t} \lceil\frac{r}{2^i}\rceil+\sum_{i=t+1}^{k-1} \lceil\frac{r}{2^i}\rceil
 =\sum_{i=0}^{t} \lceil\frac{r}{2^i}\rceil+\sum_{i=t+1}^{k-1} 1\\
 &\leq \left(r+\frac{r}{2}+\dots+\frac{r}{2^t}+(t+1) \right )+(k-1-t)\\
 &\leq r\left(1+\frac{1}{2}+\dots+\frac{1}{2^t} \right )+(t+1)+(k-1-t)\\
 &< 2r+k=2r+2^{r-1}.
 \end{align*}
Clearly, for all even $r>2$, $\displaystyle n=2^r> 2r+2^{r-1}\geq\sum_{i=0}^{k-1} \lceil\frac{d}{2^i}\rceil$. 
For $r=2$, the parameters of $\mathcal{SRC}_r$ are given by $n = 2^2 = 4$, $k = 2^{2-1} = 2$, and $d = 2$. Evaluating the exact Griesmer sum yields:
\begin{align*}
\sum_{i=0}^{k-1} \left\lceil \frac{d}{2^i} \right\rceil &= \sum_{i=0}^{1} \left\lceil \frac{2}{2^i} \right\rceil
= \left\lceil \frac{2}{1} \right\rceil + \left\lceil \frac{2}{2} \right\rceil= 2 + 1 = 3.
\end{align*}
Since $n = 4 > 3$, the Griesmer bound holds true for $r=2$.
Using similar arguments, when $r$ is odd we obtain $\displaystyle n>2(r+1)+2^{r-1}\geq\sum_{i=0}^{k-1} \lceil\frac{d}{2^i}\rceil$ for all $r>3.$ 
One can easily check Griesmer bound holds, by evaluating the exact Griesmer sum for cases $r=1$ and $3$. 
Hence, the Suslin-Rao code $\mathcal{SRC}_r$ satisfy the Griesmer bound.
\end{proof}

\begin{prop}
 The Suslin-Rao code $\mathcal{SRC}_r$ satisfy the sphere packing bound.
\end{prop}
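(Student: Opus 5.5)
We verify the sphere packing inequality $2^k\sum_{i=0}^{t}\binom{n}{i}\le 2^n$ directly from the parameters $n=2^r$, $k=2^{r-1}$ and $d$ given by Theorem \ref{thm:2.1}. Since $n-k=2^{r-1}$, the inequality is equivalent to
$$\sum_{i=0}^{t}\binom{2^r}{i}\le 2^{2^{r-1}},\qquad t=\left\lfloor \frac{d-1}{2}\right\rfloor .$$
The first step is to compute $t$. For $r$ even, $d=r$ gives $t=\frac{r-2}{2}$. For $r$ odd, $d=r+1$ gives $t=\frac{r-1}{2}$. In both cases $t\le r/2$.

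The second step bounds the left side crudely. Use $\binom{n}{i}\le n^i$, so that
$$\sum_{i=0}^t\binom{n}{i}\le \sum_{i=0}^{t} 2^{ri}<2^{r(t+1)}.$$
For $t\ge 1$ one can sharpen this to $\le 2^{rt+1}$ by the geometric sum. It then suffices to show $r(t+1)\le 2^{r-1}$, i.e. roughly $r(r+2)/2\le 2^{r-1}$. This is an elementary exponential-versus-polynomial inequality that holds for all $r$ beyond a small threshold (e.g.\ $r\ge 6$). It can be proved by induction on $r$: the right side doubles while the left side grows by a factor less than $2$ once $r$ is moderately large.

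The third step handles the small cases directly, and this is the only delicate point, since the crude bound fails there. For $r=1,2$ we have $t=0$, and the inequality reads $1\le 2^{2^{r-1}}$, which is trivial. For $r=3$, $t=1$ and $1+8=9\le 16$. For $r=4$, $t=1$ and $1+16=17\le 256$. For $r=5$, $t=2$ and $1+32+496=529\le 2^{16}$. Treating these by explicit computation, combined with the general bound for larger $r$, proves the proposition. The main obstacle is choosing the threshold so that the general estimate is valid precisely from where the explicit checks stop. If $r(t+1)\le 2^{r-1}$ turns out to be too weak at some intermediate $r$, I will instead use the sharper bound $\sum_{i\le t}\binom{n}{i}\le 2^{rt+1}$ and verify $rt+1\le 2^{r-1}$, which holds from $r=3$ onward.
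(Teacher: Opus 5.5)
Your proposal is correct and follows essentially the same route as the paper: reduce to $\sum_{i=0}^{t}\binom{2^r}{i}\le 2^{2^{r-1}}$, use $t\le r/2$, $\binom{n}{i}\le n^i$ and the geometric sum to settle $r\ge 6$, and check $r=1,\dots,5$ by hand. You also carry out those small cases explicitly, which the paper leaves to the reader.
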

\begin{proof}
 From (\cite{RR}), sphere packing bound for $[n,k,d]$ linear code over $\mathbb{F}_2$ is given by $\displaystyle 2^k \sum_{i=0}^t {n \choose i}\leq 2^n$, where $t=\lfloor \frac{d-1}{2}\rfloor.$ 
 First we find $t$, if $d=r$ then $t=\lfloor \frac{r-1}{2}\rfloor$ and if $d=r+1$ then $t=\lfloor \frac{r}{2}\rfloor.$ In both cases $t\leq \frac{r}{2}$. For Suslin-Rao code $\mathcal{SRC}_r,$ $n=2^r$, $k=2^{r-1},$ thus it is enough to prove that $\displaystyle\sum_{i=0}^t {2^r \choose i}\leq 2^{2^{r-1}}.$
 As ${n \choose i} \leq n^i,$ for all integers $i>0$, we have 
 \begin{align*}
  \displaystyle\sum_{i=0}^t{2^r \choose i}\leq \sum_{i=0}^t(2^r)^i&=\frac{(2^r)^{t+1}-1}{2^r-1}\\
  &< (2^r)^{t+1}\\
  &\leq{2^r}^{(\frac{r}{2}+1)}\\
  &<2^{2^{r-1}}~\text{for all~}r\geq 6. 
 \end{align*}
The remaining cases i.e., $r=1$ to $5$ can be verified manually. Hence, the Suslin-Rao code $\mathcal{SRC}_r$ satisfy the sphere packing bound. 
\end{proof} 

\subsection{Decoding the Suslin-Rao Code}

For decoding the proposed Suslin-Rao code, we adopt the classical syndrome decoding framework for linear codes (see \cite{FN}). Let $C$ be a linear $[n,k]$ code over $\mathbb{F}_q$ with parity-check matrix $H \in \mathbb{F}_q^{(n-k)\times n}$. If a codeword $x \in C$ is transmitted and the received vector is $y=x+e$, where $e \in \mathbb{F}_q^n$ denotes the error vector, then the associated syndrome is defined by
$s = Hy^T.$
The syndrome uniquely determines the coset $e+C$, which contains all vectors producing the same syndrome.

Decoding therefore consists of selecting an error vector $\hat e$ from this coset and forming the estimate
$
\hat x = y - \hat e.
$
In classical syndrome decoding, this selection is typically performed using a coset-leader rule, where the vector of smallest Hamming weight in the corresponding coset is chosen (see \cite{RR}). In the simulations and decoding procedure for the proposed Suslin-Rao code, we employ this standard syndrome-based decoding approach.


\subsection {Weight Distribution of Suslin-Rao Code}
Weight distribution is one of the fundamental parameter of a linear code. It captures the Hamming weight profile of the code. The weight distribution of a code $C$ is the sequence $\{A_w\}_{w=0}^n$, where $A_w$ denotes the number of codewords in $C$ of Hamming weight $w$ i.e., 
$A_w = |\{c \in C : \ {wt}(c) = w\}|,$ where $wt(c)$ is the Hamming weight of codeword $c$ (see \cite{FN}). In Suslin-Rao code $\mathcal{SRC}_r$, for all $r,$ we have $A_0=1$. If $r$ is odd then $A_w=0$ for $0<w<r+1$ and if $r$ is even then $A_w=0$ for $0<w<r.$ This information is compactly encoded in the weight enumerator, where the weight enumerator of $C$ is the polynomial $\displaystyle W_C(z) = \sum_{w=0}^n A_w z^w$ (see \cite{RR}). Weight enumerators of Suslin-Rao code $\mathcal{SRC}_r$ for $1\leq r\leq 5$ are as follows.
{\scriptsize
{
\begin{align*}
W_{\mathcal{SRC}_1}(z) &= 1 + z^2. \\
W_{\mathcal{SRC}_2}(z) &= 1 + z^2 + 2z^3. \\
W_{\mathcal{SRC}_3}(z) &= 1 + 14z^4  + z^{8}. \\
W_{\mathcal{SRC}_4}(z) &= 1 + 14z^4 + 8z^5 + 24z^6 + 56z^7 + 49z^8 + 56z^9 + 28z^{10} + 8z^{11} + 8z^{12} + 4z^{14}. \\
W_{\mathcal{SRC}_5}(z) &= 1 + 32z^6 + 300z^8 + 1952z^{10} + 6976z^{12} + 14400z^{14} + 18214z^{16} + 14400z^{18}+ 6976z^{20} + 1952z^{22} + 300z^{24} + 32z^{26} + z^{32}.
\end{align*}  
}}

Following result demonstrates symmetry in weight distribution of a linear code containing all one codeword.

\begin{prop} \label{pro:5}
 If $C$ is a binary linear  code of length $n$ and contains the all one vector as a codeword, then the weight distribution of $C$ is symmetric i.e., $A_w=A_{n-w}$ for all $0\leq w\leq n$. 
\end{prop}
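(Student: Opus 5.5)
The plan is to construct an explicit weight-reversing bijection on $C$ using the all one vector. Write ${\bf 1}=(1,1,\dots,1)\in\mathbb{F}_2^n$; by hypothesis ${\bf 1}\in C$. Define $\phi: C\to C$ by $\phi(c)=c+{\bf 1}$. Since $C$ is a linear subspace of $\mathbb{F}_2^n$ and ${\bf 1}\in C$, closure under addition gives $\phi(c)\in C$ for every $c\in C$, so $\phi$ is well defined as a map from $C$ to itself.

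Next I will check that $\phi$ is a bijection. Over $\mathbb{F}_2$ we have ${\bf 1}+{\bf 1}=0$, so $\phi(\phi(c))=c$. Thus $\phi$ is an involution on $C$, hence its own inverse and a bijection.

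The key step is the weight computation. Adding ${\bf 1}$ flips every coordinate: $(c+{\bf 1})_i=1$ exactly when $c_i=0$. Hence $\operatorname{supp}(\phi(c))=\{1,\dots,n\}\setminus\operatorname{supp}(c)$, and therefore $wt(\phi(c))=n-wt(c)$.

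Finally, for each $0\le w\le n$, $\phi$ restricts to a bijection between $\{c\in C: wt(c)=w\}$ and $\{c\in C: wt(c)=n-w\}$; its inverse is $\phi$ itself, which maps the second set back into the first by the same weight identity. Comparing cardinalities gives $A_w=A_{n-w}$, equivalently $W_C(z)=z^nW_C(1/z)$. There is no real obstacle here; the only point needing care is that linearity of $C$ is what guarantees $c+{\bf 1}\in C$, and that the characteristic-two identity ${\bf 1}+{\bf 1}=0$ is what makes $\phi$ an involution. As a consistency check, the weight enumerators listed above for $\mathcal{SRC}_3$ and $\mathcal{SRC}_5$, which contain $z^{n}$ terms, are indeed palindromic.
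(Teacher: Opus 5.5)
Your proposal is correct and matches the paper's proof: both use the translation $\phi(c)=c+{\bf 1}$, which maps $C$ to itself by linearity and satisfies $wt(\phi(c))=n-wt(c)$, so it pairs the codewords of weight $w$ with those of weight $n-w$. Your observation that $\phi$ is an involution (because ${\bf 1}+{\bf 1}=0$) just makes explicit the bijectivity that the paper asserts.
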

\begin{proof}
 Define mapping $\phi$ from $C$ to $C$ as $\phi(c)=c+{\bf 1}.$ As ${\bf 1}\in C$ and $C$ is linear, $\phi$ is a bijection from $C$ to itself. Moreover, $wt(\phi(c))=n-wt(c).$ Thus, $\phi$ induces a bijection between codewords of weight $w$ and those of weight $n-w$. Hence, $A_w=A_{n-w}$ for all $0\leq w\leq n$.
\end{proof}

Symmetry described in Proposition \ref{pro:5}, can be clearly seen in $\mathcal{SRC}_r$ for only odd $r$ (check weight distributions above). Note $\mathcal{SRC}_r$ does not contain all $\bf{1}$ vector when $r$ is even.

\section{Comparative Analysis of Code Performance}

\subsection{Comparison of Parameters} 
 We compare the parameters of Suslin-Rao code $\mathcal{SRC}_r$ for $r=3$ to $7$ with several well-known classical codes of similar length, including \textit{Hamming codes}, \textit{Simplex codes}, \textit{Hadamard codes}, \textit{Reed--Muller codes} and \textit{BCH codes} (see \cite{FN}). The comparisons are shown in Tables~\ref{tab:comparison 1} and~\ref{tab:table 2}.

 \begin{table}[htbp]
\centering
\small
\renewcommand{\arraystretch}{1.3}
\begin{tabular}{|c|c|c|c|c|}
\hline
{$r$} & {Suslin-Rao $\mathcal{SRC}_r$} & {Hamming} & {Simplex } & {Hadamard} \\
\hline
3 & [8, 4, 4] & [7, 4, 3] & [7, 3, 4] & [8, 3, 4]\\
\hline
4 & [16, 8, 4] & [15, 11, 3] & [15, 4, 8] & [16, 4, 8]\\
\hline
5 & [32, 16, 6] & [31, 26, 3] & [31, 5, 16] & [32, 5, 16]\\
\hline
6 & [64, 32, 6] & [63, 57, 3] & [63, 6, 32] & [64, 6, 32]\\
\hline
7 & [128, 64, 8] & [127, 120, 3] & [127, 7, 64] & [128, 7, 64]\\
\hline
\end{tabular}
\caption{Comparison of the Suslin-Rao Code $\mathcal{SRC}_r$ with Hamming, Simplex, and Hadamard Codes}\label{tab:comparison 1}
\end{table}

\begin{table}[htb]
\centering
\small
\renewcommand{\arraystretch}{1.3}
\begin{tabular}{|c|c|c|c|}
\hline
\( r \) & Suslin-Rao $\mathcal{SRC}_r$ & \( \mathcal{RM}(1, r) \) or \( \mathcal{RM}(r-2, r) \) & BCH Code \\
\hline
3 & [8, 4, 4] & [8, 4, 4] & [7, 4, 3] \\
\hline
4 & [16, 8, 4] & [16, 5, 8] or [16, 11, 4] & [15, 5, 7] or [15, 7, 5] \\
\hline
5 & [32, 16, 6] & [32, 6, 16] or [32, 26, 4] & [31, 11, 11] or [31, 16, 7] \\
\hline
6 & [64, 32, 6] & [64, 7, 32] or [64, 57, 4] & [63, 10, 27] or [63, 18, 21] \\
\hline
7 & [128, 64, 8] & [128, 8, 64] or [128, 120, 4] & [127, 36, 31] or [127, 64, 21] \\
\hline
\end{tabular}
\caption{Comparison of the Suslin-Rao Code $\mathcal{SRC}_r$ with Reed-Muller ( $\mathcal{RM}$(r, m)) and BCH codes}\label{tab:table 2}
\end{table}

From the tables, we observe the following:
\begin{enumerate}
    \item For each value of $r$, Suslin-Rao code $\mathcal{SRC}_r = [2^r, 2^{r-1}, d]$ achieves a good trade off between code rate and minimum Hamming distance. In particular, it improves on the error correction capability (i.e., minimum distance) compared to Hamming codes of similar length.

    \item For $r = 3$, the Suslin-Rao code $[8, 4, 4]$ has a higher minimum distance than the Hamming code $[7, 4, 3]$, and matches the Simplex code in terms of distance. The number of codewords (i.e., $2^4 = 16$) is larger than that of the Simplex code, which has only 8.

    \item For $r = 4$, the Suslin-Rao code $[16, 8, 4]$ has better minimum distance than the Hamming code $[15, 11, 3]$, and while the Simplex code $[15, 4, 8]$ has higher minimum distance, it achieves that at the cost of a much lower dimension. 

    \item For $r = 5$, the Suslin-Rao code $[32, 16, 6]$ outperforms the Hamming code $[31, 26, 3]$ in both minimum  distance and structure. 

    \item For $r = 6$ and $r = 7$, similar trends continue. The Suslin-Rao codes maintain a minimum distance of 6 and 8 respectively, which is significantly better than the constant minimum distance of 3 in Hamming codes. 

    \item The Reed-Muller code $\text{RM}(r, m)$ comes in two variants depending on whether one considers the code or its dual. For instance, for $r = 5$, it may appear as $[32, 6, 16]$ or $[32, 26, 4]$. The former has excellent distance but low dimension, and the latter has high dimension but poor error correction. In contrast, the Suslin-Rao code $[32, 16, 6]$ offers a balanced middle ground.

    \item Hadamard codes typically have very high minimum distances (e.g., $d = 16$ for length $32$), but come with extremely low dimensions (e.g., $k = 5$). Hence, while powerful for error detection/correction, they are inefficient in terms of data rate. The Suslin-Rao code outperforms Hadamard codes in terms of code rate, and provides sufficient minimum distance for moderate noise channels.
    
    \item BCH codes provide a flexible trade-off between code rate and minimum distance by adjusting the designed error-correcting capability. For $r = 5$, binary BCH codes like $[31, 11, 11]$ or $[31, 16, 7]$  are available. These codes can outperform the Suslin-Rao code in terms of distance but often at the cost of reduced dimension. The Suslin-Rao code $[32, 16, 6]$, while slightly inferior in distance to $[31, 16, 7]$, offers comparable dimension, making it attractive for low-complexity applications.
\end{enumerate}

\noindent This comparison can be summarized as follows: Suslin-Rao codes $\mathcal{SRC}_r$ provide, a balanced trade-off between distance and dimension, better minimum distance than Hamming at comparable lengths, higher code rate than Simplex and Hadamard codes and  Reed-Muller codes.

These comparisons demonstrate that the proposed Suslin-Rao codes are strong candidates for practical applications where moderate error correction and high throughput are desired, especially in systems where decoding simplicity is important.

\subsection{Rate and Relative Distance Comparison}
We compare the code rate $R = \frac{k}{n}$ and relative distance $ \delta = \frac{d}{n}$ of the Suslin-Rao code $\mathcal{SRC}_r$ with classical codes including \textit{Hamming code}, \textit{Simplex code}, \textit{Reed-Muller code} and \textit{Hadamard code}. The comparisons are shown in Table~\ref{tab:table 3}.

\begin{table}[h]
\label{tab:table 3}
\centering
\begin{tabular}{|c|c|c|}
\hline
\textbf{Code} & \textbf{Rate $R$} & \textbf{Relative Distance $\delta$} \\
\hline
Suslin-Rao $\mathcal{SRC}_r$ & $\frac{1}{2}$ & $\frac{r+1}{2^r}$ (odd $r$) or $\frac{r}{2^r}$ (even $r$)\\
\hline
Hamming & $1 - \frac{r}{2^r - 1}$ & $\frac{3}{2^r - 1}$ \\
\hline
Simplex & $\frac{r}{2^r - 1}$ & $\frac{2^{r-1}}{2^r - 1}$ \\
\hline
Reed-Muller $\mathcal{RM}(1, r)$ & $\frac{r+1}{2^r}$ & $\frac{2^{r-1}}{2^r}$ \\
\hline
Hadamard & $\frac{r}{2^r}$ & $\frac{1}{2}$ \\
\hline
\end{tabular}
\caption{Asymptotic Comparison of Code Parameters} 
\end{table}
The Suslin-Rao code $\mathcal{SRC}_r$ strikes a balanced trade-off between rate and relative distance. It offers higher rate than Simplex codes and comparable or better relative distance than Reed--Muller codes of the same order.

\subsection{BER Performance Comparison}

To examine the practical error-correcting performance of the proposed Suslin-Rao code, we consider a standard digital transmission model over an additive white Gaussian noise (AWGN) channel. The encoded binary sequences are first modulated using Binary Phase Shift Keying (BPSK), where each bit is mapped to a real-valued symbol. Specifically, the bits $0$ and $1$ are mapped to $+1$ and $-1$, respectively. BPSK modulation is widely used in performance analysis due to its analytical tractability and robustness.

The modulated symbols are transmitted over an AWGN channel (see \cite{John}). In this model, independent Gaussian noise samples are added to each transmitted symbol. The received signal is therefore given by
$ y = x + n,$
where $x \in \{+1,-1\}$ denotes the transmitted symbol and $n$ is a zero-mean Gaussian random variable with variance determined by the signal-to-noise ratio.

Performance is quantified in terms of the Bit Error Rate (BER), defined as the ratio of the total number of incorrectly decoded bits to the total number of transmitted information bits. The BER is evaluated across a range of $E_b/N_0$ values from $0$ dB to $10$ dB in increments of $1$ dB. Here $E_b$ denotes the average energy per information bit and $N_0$ represents the one-sided noise power spectral density of the AWGN channel. The ratio $E_b/N_0$ therefore measures the signal-to-noise ratio per transmitted bit and is commonly used to characterize the reliability of digital communication systems.

The proposed Suslin-Rao code $[16,8,4]$ was benchmarked against several well known linear block codes of comparable block length, namely:
Hamming $[15,11,3]$, Simplex $[15,4,8]$, BCH $[15,7,5]$, Hadamard $[16,4,8]$, and Reed-Muller $[16,5,8]$ codes(see \cite{FN}). 

For each $E_b/N_0$ value, $20{,}000$ independent message blocks were transmitted. In order to ensure a fair comparison among codes of different rates, the noise variance was adjusted according to the code rate $R = k/n$, so that
$
N_0 = \frac{1}{R \cdot (E_b/N_0)}.
$
This normalization guarantees that all codes are compared under identical energy-per-information-bit conditions.

At the receiver, hard-decision decoding was employed for all codes. The received BPSK symbols were first converted into binary values via threshold detection, assigning $0$ to positive samples and $1$ to negative samples.

For the Suslin-Rao code, syndrome decoding was implemented using its parity check matrix to detect and correct single bit error patterns. The Hamming code was decoded using minimum distance decoding. The BCH code was decoded using MATLAB's built-in \texttt{comm.BCHDecoder} object. The Simplex, Hadamard, and Reed Muller codes were decoded using exhaustive minimum distance (brute force) decoding by comparing the received vector with all valid codewords and selecting the closest match in Hamming distance.

The BER curves plotted in Figure \ref{fig:ber-comparison}, illustrate the comparative performance of the considered codes over the AWGN channel. Key observations from the results are summarized below: 
\begin{figure}[h] 
    \centering
    \includegraphics[width=0.65\textwidth]{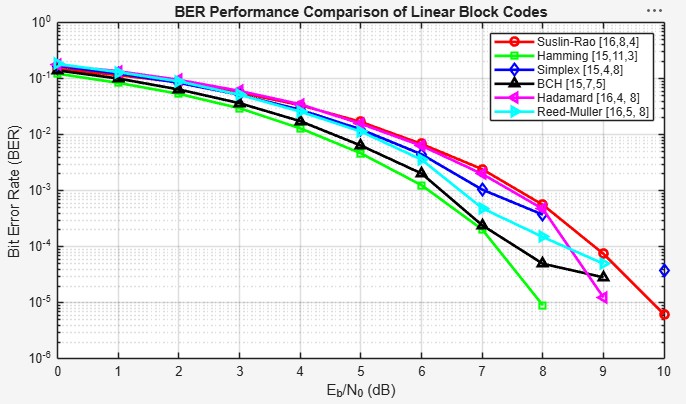} 
    \caption{BER Comparison of Suslin-Rao Code}
    \label{fig:ber-comparison}
\end{figure}

\subsubsection*{Key Observations and Quantitative Comparison}

The BER curves in Figure~\ref{fig:ber-comparison} exhibit the expected exponential decay behavior characteristic of linear block codes over the AWGN channel under hard-decision decoding. The performance comparison may be interpreted in three distinct SNR regimes.

\paragraph{Low SNR Region (0-3 dB):}
In the noise dominated region, all codes demonstrate comparable BER performance. The minimum distance structure does not significantly influence decoding reliability in this regime, and the curves remain closely clustered.

\paragraph{Moderate SNR Region (4-7 dB):}
Performance separation becomes clearly visible as the signal to noise ratio increases. At a BER level of approximately $10^{-3}$, the proposed Suslin-Rao $[16,8,4]$ code requires roughly $7.5$ dB. In comparison, the Hamming $[15,11,3]$ code achieves the same BER near $6$ dB, while the BCH $[15,7,5]$ code requires approximately $6.5$ dB. The Simplex $[15,4,8]$ code reach at this BER level at $7$ dB, while Reed Muller $[16,5,8]$ code requires approximately $6.5$ dB. Hadamard $[16,4,8]$ requires approximately $7.5$ dB which is very close to Suslin-Rao code.  

Thus, at BER $\approx 10^{-3}$, the Suslin-Rao code exhibits performance comparable to classical block codes of similar length, with differences within approximately $0.5$ dB to $1.5$ dB.

\paragraph{High SNR Region (8-10 dB):}
In the high-SNR regime, the asymptotic slope of the BER curve becomes the dominant performance indicator. The proposed $[16,8,4]$ Suslin-Rao code continues to demonstrate stable exponential decay, reaching BER values on the order of $10^{-5}$ at approximately $10$ dB. 

The observed slope is consistent with the minimum distance $d=4$, indicating that the dominant error events are governed by weight-$4$ codewords. The decay behavior remains regular and does not exhibit early flattening within the simulated SNR range.

\paragraph{Effect of Code Rate:} 
It is important to emphasize that the considered codes operate at different rates:
\[
R_{\text{Suslin-Rao}} = \frac{1}{2}, ~
R_{\text{Hamming}} = \frac{11}{15}, ~
R_{\text{Simplex}} = \frac{4}{15}, ~
R_{\text{BCH}} = \frac{7}{15}, ~
R_{\text{RM}} = \frac{5}{16}.
\]
Although rate normalization was incorporated in the noise variance computation, higher-rate codes inherently transmit more information per channel use, which influences practical system efficiency.

In this context, the Suslin-Rao code achieves a balanced trade off between rate and error performance. Despite operating at rate $1/2$, it maintains BER characteristics comparable to lower-rate constructions such as Reed Muller and Simplex codes under identical hard decision decoding assumptions.

\section{Acknowledgement}
The corresponding author (Anuradha S. Garge) would like to express her sincere gratitude to Professor Manoj Kumar Keshari (IIT Bombay) for organizing in $2018,$ the Advanced Instructional School on Projective Modules at IIT Bombay, where she got the idea of exploring the possibility of constructing codes from Suslin matrices. She also thanks Professor Ravi A. Rao for introducing her to the subject of $K-$theory specially to the papers of Suslin and Vaserstein. This paper was unfortunately not seen by him due to his sad demise on May $5$, $2026$. With deep gratitude, we dedicate this paper to his memory.

\end{document}